\newtheorem{theorem}{Theorem}
\newtheorem{lemma}[theorem]{Lemma}
\def\@endtheorem{\endtrivlist}
\newcommand{\substr}[3]{#1[#2..#3]}
\newcommand{\scheme}{\mathcal{S}}
\newcommand{\schemelam}{\mathcal{S}_{\text{Lam}}}
\newcommand{\partition}{X}
\newcommand{\numstrings}{\mathit{strings}}
\newcommand{\numstringsp}{\numstrings'}
\newcommand{\strings}{\mathcal{A}}
\newcommand{\numnodes}[1]{\mathit{nodes}_{#1}}
\newcommand{\numnodesd}[2]{\mathit{nodes}_{#1,#2}}
\newcommand{\prob}[2]{\mathit{prob}_{#1,#2,\sigma}}
\newcommand{\Psearch}{P^*}
\newcommand{\numstringsedit}{\numstrings_{\mathrm{edit}}}
\newcommand{\numnodesp}[2]{\mathit{nodes}_{#1}(#2)}
\newcommand{\numnodesqp}[3]{\mathit{nodes}_{#1,#2}(#3)}
\newcommand{\numnodesqpb}[3]{\mathit{nodes}_{#1,#2,#3}}
\newcommand{\automaton}[1]{\mathcal{A}_{#1}}
\newcommand{\transition}[1]{\hat{\delta}_{#1}}
\newcommand{\states}[1]{\mathrm{Q}_{#1}}
\newcommand{\numstringspp}{\numstrings''}
\newcommand{\logn}{N}
\newcommand{\lognc}{N}
\newcommand{\ppartition}{Y}
\newcommand{\spartition}{Z}
\newcommand{\plen}[1]{m_{#1}}
\newcommand{\plast}[1]{l_{#1}}
\newcommand{\pnumparts}[1]{p_{#1}}
\newcommand{\prank}[1]{r_{#1}}
\newcommand{\pprefix}[1]{\mathit{prefix}(#1)}
\newcommand{\optv}[3]{v(#1,#2,#3)}
\newcommand{\seq}[2]{[#1,#2]}
\newcommand{\seqr}[2]{\overline{[#1,#2]}}
\newif\iffull
\newif\ifextra
\begin{document}

\title{Approximate String Matching using a Bidirectional Index}
\author{
Gregory Kucherov%
\thanks{CNRS/LIGM, Universit\'e Paris-Est Marne-la-Vall\'ee, France}
\and
Kamil Salikhov$^*$\thanks{Mechanics and Mathematics Department, Lomonosov Moscow State University, Russia}
\and
Dekel Tsur%
\thanks{Department of Computer Science, Ben-Gurion University of
the Negev, Israel}
}
\date{}
\maketitle

\begin{abstract}
We study strategies of approximate pattern matching that exploit
bidirectional text indexes, extending and generalizing ideas of
\cite{LamLTWWY09}. We introduce a formalism, called search schemes, to specify search strategies of
this type, then develop a probabilistic measure for the efficiency of
a search scheme, prove several combinatorial results on efficient
search schemes, and finally, provide experimental computations
supporting the superiority of our strategies. 
\end{abstract}

\section{Introduction}

Approximate
string matching has numerous practical applications and has long been a subject of extensive studies by 
algorithmic researchers~\cite{Navarro:2001:GTA:375360.375365}. If errors are allowed in a
match between a pattern string and a text string, most of
fundamental ideas behind exact string search algorithms become
inapplicable.

The problem of approximate string matching comes in
different variants. In this paper, we are concerned with the
\emph{indexed} variant, when a static text is available for
pre-processing and storing in a data structure (index), before any matching
query is made. 
The 
challenge of indexed approximate matching is to construct a small-size
index supporting quick search for approximate 
pattern occurrences, within a worst-case time weakly dependent on the text
length. 
From the theoretical perspective, 
even the case of one allowed error turned out to be
highly nontrivial and gave rise to a series of works (see
\cite{LamSW05} and references therein). In the case of $k$ errors,
existing solutions generally have time or space complexity that is exponential
in $k$, see~\cite{SungEncyclopedia08} for a survey.

The quest for efficient approximate string matching algorithms has
been boosted by a new generation of DNA sequencing
technologies, capable to produce huge quantities of short 
DNA sequences, called \emph{reads}. 
Then, an important task is to \emph{map} 
those reads to a given reference genomic sequence,
which requires very fast and accurate approximate string matching
algorithms. 
This motivation resulted in a very large number of read mapping algorithms and
associated software programs, 
we refer to~\cite{Li01092010} for a survey.

Broadly speaking, 
read mapping algorithms follow one of two main approaches, or sometimes
a combination of those. 
The \emph{filtration} approach proceeds in two
steps: it first identifies (with or without using a full-text index)
locations of the text where the pattern can \emph{potentially} occur,
and then verifies these locations for actual matches. Different filtration schemes have been proposed
\cite{NavarroRaffinot-book02,FarachLST-conf,KucherovNoeRoytbergJCBB05,KarkkainenN07}. Filtration
algorithms usually don't offer interesting worst-case time and space bounds but are
often efficient on average and are widely used in practice. 
Another approach, usually called \emph{backtracking}, 
extends exact matching algorithms to the approximate case by some
enumeration of possible errors and by simulating exact search of all
possible variants of the pattern. It is this approach that we follow in
the present work. Backtracking and filtration techniques can be
combined in a \emph{hybrid} approach~\cite{NavarroB00}. 

Some approximate matching algorithms use standard text indexes, such as
suffix tree or suffix arrays. However, for large
datasets occurring in modern applications, these indexes are
known to take too much memory. Suffix arrays and suffix trees
typically require at least 4 or 10 \emph{bytes} per character respectively. 
The last years saw the development of \emph{succinct} or \emph{compressed full-text indexes}
that occupy virtually as much memory as the sequence itself and yet
provide very powerful functionalities~\cite{NavarroM07}.
For example, the FM-index~\cite{FerraginaM00},
based on the Burrows-Wheeler Transform~\cite{BurrowW94}, may
occupy 2--4 \emph{bits} of memory {per character} for DNA texts. FM-index has now been used in
many practical bioinformatics software programs,
e.g.~\cite{LangmeadTPS09,LiD09,Simpson01032012}.
Even if succinct indexes are primarily
designed for exact string search, using them for approximate matching
naturally became an attractive opportunity. This direction has been
taken in several papers, see~\cite{RussoEtAlAlgorithms09}, as well as
in practical implementations~\cite{Simpson01032012}. 

Interestingly, succinct indexes can provide even more functionalities
than classical ones. In particular, succinct indexes can be
made \emph{bidirectional}, i.e.\ can perform pattern search in both
directions 
\cite{LamLTWWY09,RussoEtAlAlgorithms09,SchnattingerOG12,BelazzouguiCKM13}.
Lam et al.~\cite{LamLTWWY09} showed how a
bidirectional FM-index can be used to efficiently search for strings
up to a small number (one or two) errors. The idea is to partition the pattern into $k+1$
equal parts, where $k$ is the number of errors,
and then perform multiple searches on the FM-index, where
each search assumes a different distribution of mismatches among the
pattern parts. It has been shown experimentally
in~\cite{LamLTWWY09} that this improvement leads to a
faster search compared to the best existing read alignment
software.
Related algorithmic ideas appear also in~\cite{RussoEtAlAlgorithms09}.

In this paper, we extend the search strategy of~\cite{LamLTWWY09} 
in two
main directions. We consider the case of arbitrary $k$ and propose to
partition the pattern into more than $k+1$ parts that can be of
\emph{unequal} size. To demonstrate the benefit of both ideas, we first introduce a general formal framework for this
kind of algorithm, called \emph{search scheme}, that allows us to easily specify them and to
reason about them (Section~\ref{sec:bidirectional}). 
Then, in Section~\ref{sec:analysis} we perform a probabilistic analysis
that provides us with a quantitative measure of performance of a search
scheme, and give an efficient algorithm for obtaining the optimal
pattern partition for a given scheme.
Furthermore, we prove
several combinatorial results on the design of efficient search
schemes (Section~\ref{sec:design}).
Finally, Section~\ref{sec:experiments} contains comparative analytical
estimations, based on our probabilistic analysis, that demonstrate the
superiority of our search strategies for many practical parameter
ranges. 
We further report on large-scale experiments on genomic data
supporting this analysis. 

\section{Bidirectional search}\label{sec:bidirectional}

In the framework of text indexing, pattern search is usually done by
scanning the pattern online and recomputing \emph{index points}
referring to the 
occurrences of the scanned part of the pattern. 
With classical text indexes, such as suffix trees or
suffix arrays, the pattern is scanned left-to-right (\emph{forward
search}). However, some compact indexes such as FM-index provide a
search algorithm that scans the pattern right-to-left (\emph{backward
search}). 

Consider now approximate string matching. For ease of presentation, we
present most of our ideas for the
case of Hamming distance
(recall that the Hamming distance between two strings $A$ and $B$ of equal
lengths is the number of indices $i$ for which $A[i] \neq B[i]$),
although our algorithms extend to the edit
distance as well. Section~\ref{sec:estimation-edit} below will
specifically deal with the edit distance.

Assume that $k$ letter mismatches are allowed between a pattern $P$ and
a substring of length $|P|$ of a text $T$.
Both forward and backward search can be extended to
approximate search in a straightforward way, by exploring all
possible mismatches along the search, as long as their number does
not exceed $k$ and the current pattern still occurs in the text.
For the forward search, for example, the algorithm enumerates all
substrings of $T$ with
Hamming distance at most $k$ to a \emph{prefix} of $P$.
Starting with the empty string, the enumeration 
is done by extending the current string with the corresponding letter
of $P$, and with all other letters provided that the number of accumulated
mismatches has not yet reached $k$. For each extension,
its positions in $T$ are computed using the index.
Note that the set of enumerated strings is closed under prefixes and
therefore can be represented by the nodes of a trie.
Similar to forward search, \emph{backward search}
enumerates all substrings of $T$
with Hamming distance at most $k$ to a \emph{suffix} of $P$.

Clearly, backward and forward search are symmetric and, once we have
an implementation of one, the other can be implemented similarly by constructing the index
for the reversed text. 
However, combining both forward and backward search within one algorithm
results in a more efficient search.
To illustrate this, consider the case $k=1$.
Partition $P$ into two equal length parts $P=P_1 P_2$.
The idea is to perform two complementary searches: forward search for
occurrences of $P$ with a mismatch in $P_2$ and backward search for
occurrences with a mismatch in $P_1$. In both searches, branching is
performed only after $|P|/2$ characters are matched. 
Then,
the number of strings enumerated by the
two searches is much less than the number of strings enumerated by a single
standard forward search, even though two searches are performed instead
of one.

A \emph{bidirectional index} of a text allows one to extend the current
string $A$ both left and right, that is, compute the positions of either
$cA$ or $Ac$ from the positions of $A$.
Note that a bidirectional
index allows forward and backward searches to alternate, which will be
crucial for our purposes. 
Lam et al.~\cite{LamLTWWY09} showed how the FM-index can be made
bidirectional. Other succinct bidirectional indexes were given
in~\cite{RussoEtAlAlgorithms09,SchnattingerOG12,BelazzouguiCKM13}.
Using a bidirectional index, such as FM-index, forward and backward searches can be
performed in time linear in the number of enumerated strings.
Therefore, our main goal is to organize the search so that the number
of enumerated strings is minimized.

Lam et al.~\cite{LamLTWWY09} gave a new search algorithm, called
\emph{bidirectional search}, that utilizes the bidirectional property
of the index.
Consider the case $k=2$, studied in~\cite{LamLTWWY09}.
In this case, the pattern is partitioned into three equal length parts,
$P=P_1 P_2 P_3$.
There are now 6 cases to consider according to the placement of mismatches
within the parts:
011 (i.e.\ one mismatch in $P_2$ and one mismatch in $P_3$),
101, 110, 002, 020, and 200.
The algorithm of Lam et al.~\cite{LamLTWWY09} performs three searches
(illustrated in Figure~\ref{fig:tries}):
\begin{enumerate}
\item A forward search that allows no mismatches when processing
characters of $P_1$, and 0 to 2 accumulated mismatches when processing 
characters of $P_2$ and $P_3$.
This search handles the cases 011, 002, and 020 above.
\item
A backward search that allows no mismatches when processing characters
of $P_3$,
0 to 1 accumulated mismatches when processing characters of $P_2$, and
0 to 2 accumulated mismatches when processing characters of $P_1$.
This search handles the cases 110 and 200 above.
\item
The remaining case is 101.
This case is handled using a \emph{bidirectional search}.
It starts with a forward search on string $P'=P_2 P_3$ that
allows no mismatches when processing characters of $P_2$, and
0 to 1 accumulated mismatches when processing the characters of $P_3$.
For each string $A$ of length $|P'|$ enumerated by the forward search whose
Hamming distance from $P'$ is exactly 1, a backward search for $P_1$ is performed
by extending $A$ to the left, 
allowing one additional mismatch.
In other words, the search allows 1 to 2 accumulated mismatches when processing the
characters of $P_1$.
\end{enumerate}
\begin{figure}
\centering
\subfigure[Forward search]{\includegraphics[scale=0.5]{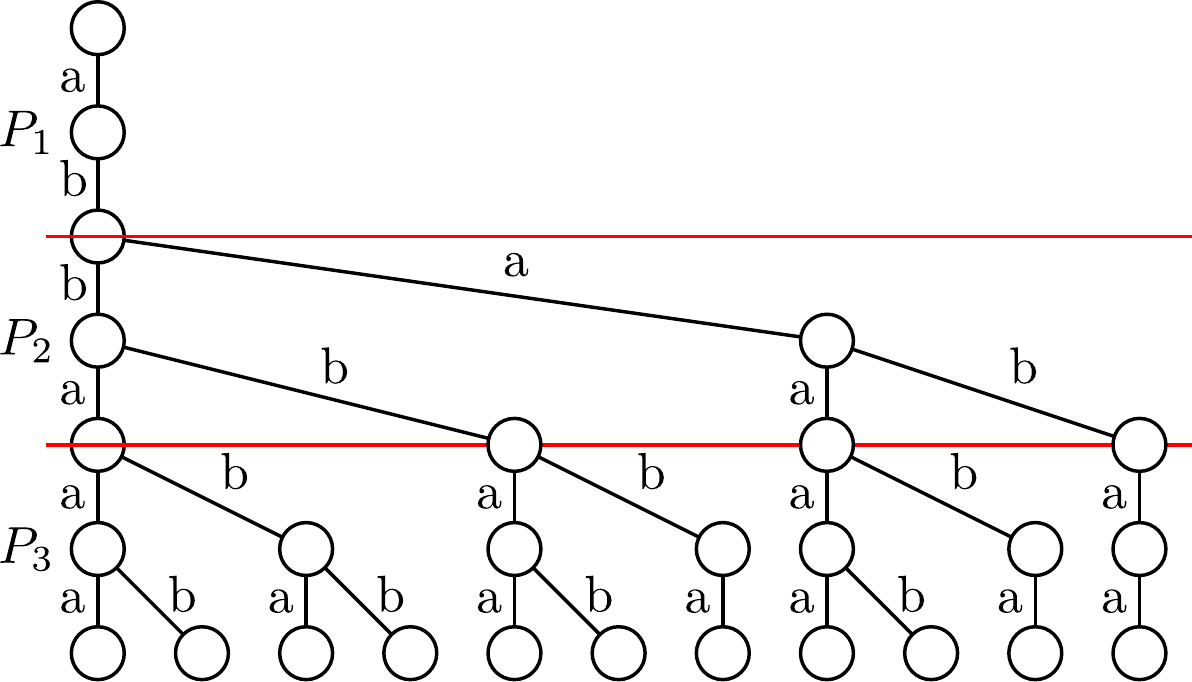}}
\subfigure[Backward search]{\includegraphics[scale=0.5]{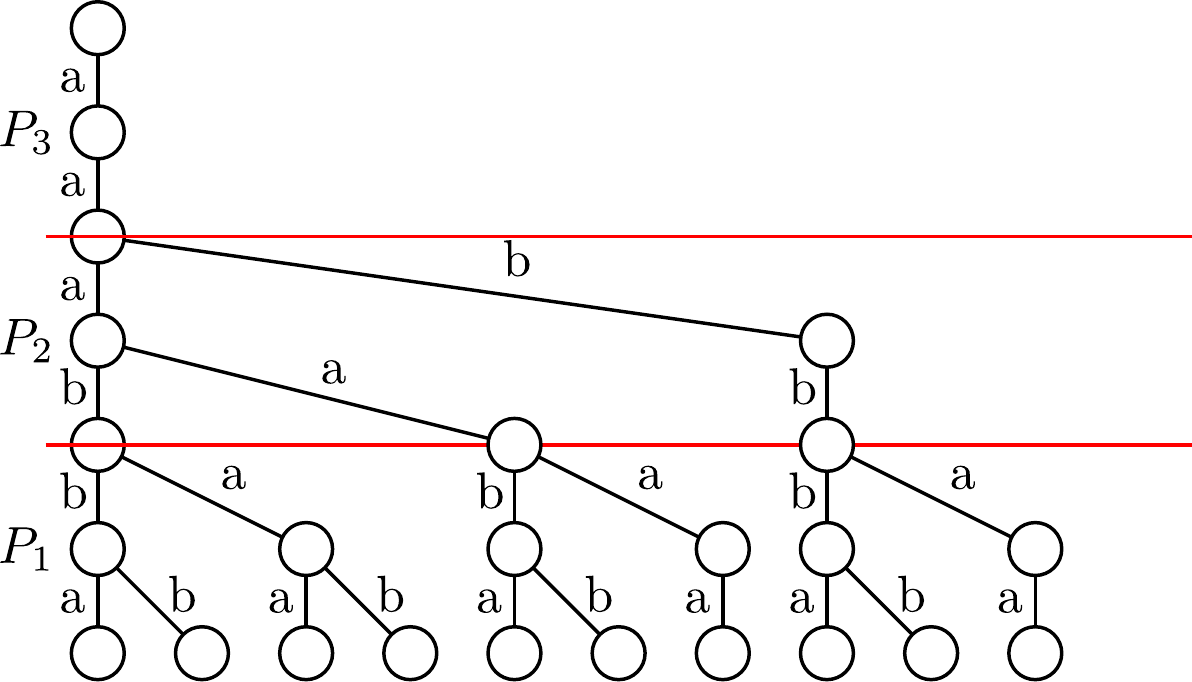}}
\subfigure[Bidirectional search\label{fig:trie-Sbd}]
 {\includegraphics[scale=0.5]{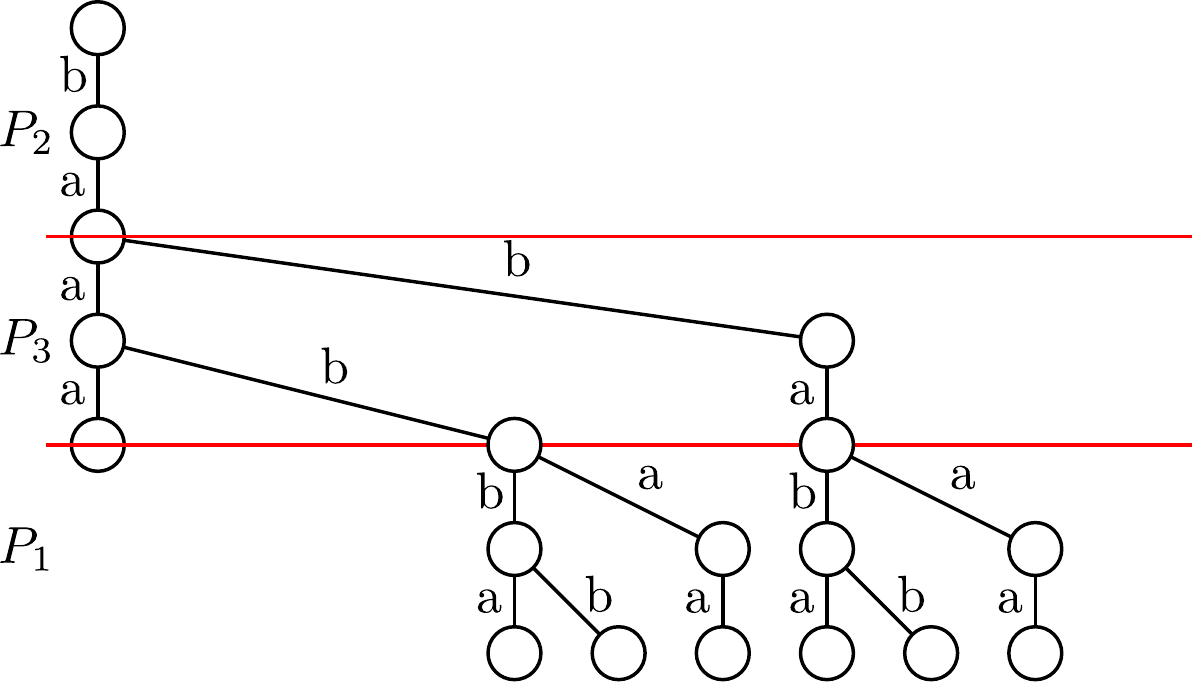}}
\caption{The tries representing the searches of Lam et al.\ for binary
alphabet $\{\mathrm{a},\mathrm{b}\}$, search string $P=\mathrm{abbaaa}$,
and number of errors $2$.
Each trie represents one search 
and assumes that all the enumerated substrings exist in the text $T$.
In an actual search on a specific $T$, each trie contains of a subset of
the nodes, depending on whether the strings of the nodes in the trie appear
in $T$.
A vertical edge represents a match, and a diagonal edge represents a mismatch.
\label{fig:tries}}
\end{figure}

We now give a formal definition for the above.
Suppose that the pattern $P$ is partitioned into $p$ parts.
A \emph{search} is a triplet of strings $S=(\pi,L,U)$
where $\pi$ is a permutation string of length $p$ over $\{1,\ldots,p\}$,
and $L,U$ are strings of length $p$ over $\{0,\ldots,k\}$.
The string $\pi$ indicates the order in which the parts of $P$ are processed,
and thus it must satisfy the following \emph{connectivity property}:
For every $i>1$,
$\pi(i)$ is either $(\min_{j<i}\pi(j))-1$ or $(\max_{j<i}\pi(j))+1$.
The strings $U$ and $L$ give upper and lower bounds on the number of
mismatches:
When the $j$-th part is processed, the number of accumulated mismatches
between the active strings and the corresponding substring of $P$ must be between
$L[j]$ and $U[j]$.
Formally, for a string $A$ over integers,
the \emph{weight} of $A$ is $\sum_i A[i]$.
A search $S=(\pi,L,U)$ \emph{covers} a string $A$ if
$L[i+1] \leq \sum_{j=1}^i A[j] \leq U[i]$ for all $i$
(assuming $L[p+1]=0$).
A \emph{$k$-mismatch search scheme} $\scheme$ is a collection of
searches such that for every string $A$ of weight $k$,
there is a search in $\scheme$ that covers $A$.
For example, the 2-mismatch scheme of Lam et al.\ 
consists of
searches $S_f = (123,000,022)$, $S_b = (321,000,012)$,
and $S_{bd} = (231,001,012)$. We denote this scheme by $\schemelam$.

In this work, we introduce two types of improvements over the search scheme of Lam et al.
\paragraph{Uneven partition.}
In $\schemelam$, search $S_f$ enumerates more strings
than the other two searches, as it allows 2 mismatches on the second
processed part
of $P$, while the other two searches allow only one mismatch.
If we increase the length of $P_1$ in the partition of $P$, the
number of strings enumerated by $S_f$ will decrease, while
the number of strings enumerated by the two other searches will increase.
We show that for some typical parameters of the problem, the decrease in the
former number is larger than the increase of the latter number,
leading to a more efficient search. 

\paragraph{More parts.}
Another improvement can be achieved using partitions with $k+2$ or
more parts, rather than $k+1$ parts.
\iffull
We explain in Section~\ref{sec:uneven} why such partitions can
reduce the number of enumerated strings.
\fi

\section{Analysis of search schemes}\label{sec:analysis}
In this section we show how to estimate the performance of a given search scheme
$\scheme$. Using this technique, we \iffull first explain why an
uneven partition can lead to a better performance, and then \fi present a
dynamic programming algorithm for designing an
optimal partition of a pattern. 

\subsection{Estimating the efficiency of a search scheme}\label{sec:estimation}
To measure the efficiency of a search scheme, we estimate the number of strings
enumerated by all the searches of $\scheme$.
We assume that performing single steps of
forward, backward, or bidirectional searches takes the same amount of time.
It is fairly straightforward to extend the method of this section to the
case when these times are not equal.
Note that the bidirectional index of Lam et al.~\cite{LamLTWWY09}
reportedly spends slightly more time (order of 10\%) on forward search
than on backward search. 

For the analysis, we assume that characters
of $T$ and $P$ are randomly drawn uniformly and independently
from the alphabet.
We note that it is possible to extend the method of this section to a
non-uniform distribution.
For more complex distributions,
a Monte Carlo simulation can be applied
which, however, 
requires much more time than
the method of this section.

\subsubsection{Hamming distance}\label{sec:hamming-estimation}

Our approach to the analysis is as follows.
Consider a fixed search $S$, and the trie representing this search 
(see Figure~\ref{fig:tries}).
The search enumerates the largest number of strings when
the text contains all strings of length $m$ as substrings.
In this case, every string that occurs in the trie is enumerated.
For other texts, the set of enumerated strings is a subset of the set of strings
that occurs in trie.
The expected number of strings enumerated by $S$ on random $T$ and $P$ is equal
to the sum over all nodes $v$ of the trie of the probability
that the corresponding string appears in $T$.
We will first show that this probability depends only on the depth of $v$
(Lemmas~\ref{lem:substring-probability} and~\ref{lem:Ali} below).
Then, we will show how to count the number of nodes in each level of the
trie.

Let $\prob{n}{l}$ denote the probability that a random string of length $l$
is a substring of a random string of length $n$, where the characters
of both strings are randomly chosen uniformly and independently
from an alphabet of size $\sigma$.
The following lemma gives an approximation for $\prob{n}{l}$ with
a bound on the approximation error.
\begin{lemma}\label{lem:substring-probability}
$|\prob{n}{l} - (1-e^{-n/\sigma^l})| \leq
\begin{cases}
 4nl/\sigma^{2l} & \text{if }l \geq \log_{\sigma} n \\
4l/\sigma^l & \text{otherwise}
\end{cases}$.
\end{lemma}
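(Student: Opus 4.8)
The plan is to estimate $\prob{n}{l}$ via a union/inclusion-exclusion argument over the $n-l+1$ starting positions at which the length-$l$ pattern could occur in the length-$n$ text. Let $W$ be a fixed random pattern of length $l$, and for a random text $R$ of length $n$ let $E_i$ be the event that $W$ occurs in $R$ starting at position $i$, for $1 \le i \le n-l+1$. Each single $E_i$ has probability exactly $\sigma^{-l}$, since we require $l$ independent uniform characters to match. Writing $N_{\mathrm{pos}} = n-l+1$, I would compare $\prob{n}{l} = \Pr[\bigcup_i E_i]$ to the idealized value $1-(1-\sigma^{-l})^{N_{\mathrm{pos}}}$, which in turn I would compare to the stated target $1-e^{-n/\sigma^l}$. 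The first comparison is a dependency correction (the events $E_i$ are \emph{not} independent when their windows overlap), and the second is the standard analytic approximation $(1-x)^{N_{\mathrm{pos}}} \approx e^{-x N_{\mathrm{pos}}}$ together with replacing $N_{\mathrm{pos}}=n-l+1$ by $n$.

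For the independence-correction step I would use the Bonferroni/inclusion-exclusion bracketing
\begin{equation*}
\sum_i \Pr[E_i] - \sum_{i<j}\Pr[E_i \cap E_j]
\;\le\; \Pr\Bigl[\bigcup_i E_i\Bigr] \;\le\; \sum_i \Pr[E_i].
\end{equation*}
The upper bound gives $\prob{n}{l} \le N_{\mathrm{pos}}\,\sigma^{-l} \le n\sigma^{-l}$, and also $\prob{n}{l}\le 1$. The real work is bounding the second-order term $\sum_{i<j}\Pr[E_i\cap E_j]$. Here I would split pairs by whether the two occurrence windows overlap. For non-overlapping windows the two occurrences are independent, contributing $\sigma^{-2l}$ each, and there are at most $\binom{N_{\mathrm{pos}}}{2}\le n^2/2$ of them, for a total of at most $n^2\sigma^{-2l}/2$. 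For overlapping windows (offset $d$ with $0<d<l$), the joint event forces a periodicity constraint, so $\Pr[E_i\cap E_j]\le \sigma^{-(l+d)}$; summing the $O(n)$ choices of $i$ against the $l-1$ overlap offsets and using a geometric sum in $\sigma^{-d}$ yields a bound of order $n l\,\sigma^{-l-1}$, which is dominated by the non-overlapping term. This pins the dependency correction at $O(n^2\sigma^{-2l})$, which after dividing suggests the $4nl/\sigma^{2l}$ form of the error in the regime $l\ge\log_\sigma n$.

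For the analytic step I would bound $|(1-\sigma^{-l})^{N_{\mathrm{pos}}} - e^{-N_{\mathrm{pos}}/\sigma^l}|$ using $1-x \le e^{-x}\le 1-x+x^2/2$ for $x=\sigma^{-l}\in[0,1]$, giving a per-factor discrepancy of order $\sigma^{-2l}$ that accumulates to $O(N_{\mathrm{pos}}\sigma^{-2l})=O(n\sigma^{-2l})$, and then absorb the difference between $e^{-N_{\mathrm{pos}}/\sigma^l}$ and $e^{-n/\sigma^l}$ by noting the exponents differ by $(l-1)/\sigma^l$ and using $|e^{-a}-e^{-b}|\le|a-b|$. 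The two regimes in the statement reflect a bias--variance style tradeoff: when $l\ge\log_\sigma n$ we have $n/\sigma^l\le 1$, so both $\prob{n}{l}$ and $1-e^{-n/\sigma^l}$ are small and the tighter second-order bound $4nl/\sigma^{2l}$ is available; when $l<\log_\sigma n$ the quantity $n\sigma^{-l}$ may exceed $1$ so the inclusion-exclusion correction is useless, and instead I would bound \emph{both} $\prob{n}{l}$ and $1-e^{-n/\sigma^l}$ crudely and argue the cruder $4l/\sigma^l$ estimate, essentially trading the sharp dependency analysis for the trivial bounds $\prob{n}{l}\le 1$ and $e^{-n/\sigma^l}\le 1$.

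The main obstacle I expect is the overlapping-window pair term: getting a clean bound on $\Pr[E_i\cap E_j]$ for small offsets requires the string-combinatorics observation that two overlapping occurrences of $W$ impose a period on $W$, and I must be careful that the bound $\sigma^{-(l+d)}$ is summed correctly over both the $O(n)$ anchor positions and the $l-1$ offsets so as not to overshoot the target $4nl/\sigma^{2l}$. A secondary subtlety is book-keeping the two sources of error (the independence correction and the $(1-x)^n$-versus-$e^{-nx}$ approximation) so that their sum, together with the $N_{\mathrm{pos}}\to n$ replacement, collapses into the single constant $4$ stated in the lemma; I would handle this by tracking each contribution as a small constant multiple of $nl\sigma^{-2l}$ and verifying the constants add up within the claimed factor.
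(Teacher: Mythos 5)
Your plan has a genuine gap: plain Bonferroni/inclusion--exclusion cannot reach the claimed error bound. Your own accounting shows the second-order term is dominated by the $\binom{N_{\mathrm{pos}}}{2}\approx n^2/2$ non-overlapping pairs, each contributing $\sigma^{-2l}$, for a total of order $n^2\sigma^{-2l}$. But this is precisely the size of the gap between your first-order bound $N_{\mathrm{pos}}\sigma^{-l}$ and the target $1-e^{-n/\sigma^l}$ (they agree only to first order in $\lambda=n/\sigma^l$), so nothing cancels and your total error budget is $\Theta(n^2\sigma^{-2l})$ --- larger than the claimed $4nl/\sigma^{2l}$ by a factor of roughly $n/l$. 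At $l=\log_\sigma n$ your bound is $\Theta(1)$ while the lemma claims $O(l/n)$. The phrase ``after dividing suggests the $4nl/\sigma^{2l}$ form'' has no justification; there is no legitimate division that turns $n^2\sigma^{-2l}$ into $nl\sigma^{-2l}$. The paper instead invokes the Chen--Stein Poisson approximation, whose error term sums $\Pr[E_i]\Pr[E_j]+\Pr[E_i\cap E_j]$ only over pairs $j$ in the local dependency neighborhood $F_i=\{i-l+1,\ldots,i+l-1\}$ of size $2l-1$; the $\Theta(n^2)$ independent pairs contribute nothing, and the $O(nl)$ dependent pairs give exactly the $4nl/\sigma^{2l}$ of the statement. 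That theorem is the missing idea.

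Two further points. First, the paper exploits that the pattern is random as well: averaging over $A$, one gets $\Pr[E_i\cap E_j]=\sigma^{-2l}$ exactly even for overlapping windows (probability $\sigma^{-(l-d)}$ that $A$ has period $d$, times $\sigma^{-(l+d)}$ for the text characters in the union of the two windows). Your fixed-pattern periodicity bound $\sigma^{-(l+d)}$, summed over offsets and anchors, is of order $n\sigma^{-l-1}$, which by itself already exceeds $4nl/\sigma^{2l}$ once $\sigma^{l}\gg l$, so treating $W$ as fixed and worst-case does not suffice for the lemma as stated. Second, the two regimes in the statement do not arise from ``trivial bounds'' when $l<\log_\sigma n$: there $e^{-n/\sigma^l}$ can be as large as about $e^{-1}$ while $4l/\sigma^l$ is tiny, so crude bounding of both quantities fails. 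In the paper both regimes fall out of the single Chen--Stein prefactor $(1-e^{-\lambda})/\lambda\le\min(1,1/\lambda)$, which multiplies the common bound $4nl/\sigma^{2l}$ by $\sigma^{l}/n$ exactly when $\lambda>1$.
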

\begin{proof}
Let $A$ and $B$ be random strings of length $l$ and $n$, respectively.
Let $E_i$ be the event that $A$ appears in $B$ at position $i$.
The event $E_i$ is independent of the events
$\{E_j: j \in \{1,2,\ldots,n-l+1\} \setminus F_i\}$, where
$F_i =\{i-l+1,i-l+2,\ldots,i+l-1 \}$.
By the Chen-Stein method~\cite{Chen75,Barbour92},
\[
\left|\prob{n}{l} - (1-e^{-n/\sigma^l})\right| \leq
\frac{1-e^{-\lambda}}{\lambda}
\sum_{i=1}^{n-l+1} \sum_{j\in F_i} (\Pr[E_i]\Pr[E_j]+\Pr[E_i \cap E_j]),
\]
where $\lambda = n/\sigma^l$.
Clearly, $\Pr[E_i] = \Pr[E_j] = 1/\sigma^l$.
It is also easy to verify that $\Pr[E_i \cap E_j] = 1/\sigma^{2l}$.
Therefore,
$|\prob{n}{l} - (1-e^{-n/\sigma^l})| \leq
((1-e^{-\lambda})/\lambda) \cdot 4nl/\sigma^{2l}$.
The lemma follows since $(1-e^{-\lambda})/\lambda \leq \min(1,1/\lambda)$ 
for all $\lambda$.
\end{proof}
The bound in Lemma~\ref{lem:substring-probability} on the error of the
approximation of $\prob{n}{l}$ is large if $l$ is small,
say $l < \frac{1}{2}\log_{\sigma} n$.
In this case, we can get a better bound by observing that
$\prob{n}{l} \geq \prob{n}{l_0}$, where $l_0 = \frac{3}{4} \log_{\sigma} n$.
Since $\prob{n}{l_0} \geq 1-e^{-n/\sigma^{l_0}} - 4l_0/\sigma^{l_0}$,
we obtain that $|\prob{n}{l} - (1-e^{-n/\sigma^l}) | \leq
\max(e^{-n/\sigma^l},e^{-n/\sigma^{l_0}} + 4 l_0/\sigma^{l_0})$.

Let $\numstrings(S,\partition,\sigma,n)$ denote the expected number of strings
enumerated when performing a search $S=(\pi,L,U)$ on a random
text of length $n$ and random pattern of length $m$, where
$\partition$ is a partition of the pattern and
$\sigma$ is the alphabet size 
(note that $m$ is not a parameter for $\numstrings$ since the value
of $m$ is implied from $\partition$).
For a search scheme $\scheme$, $\numstrings(\scheme,\partition,\sigma,n) =
\sum_{S\in \scheme} \numstrings(S,\partition,\sigma,n)$.

Fix $S$, $\partition$, $\sigma$, and $n$.
Let $\strings_l$ be the set of enumerated strings of length $l$
when performing search $S$ on a random pattern of length $m$,
partitioned by $\partition$,
and a text $\hat{T}$ containing all strings of length at most $m$ as
substrings.
Let $A_{l,i}$ be the $i$-th element of $\strings_l$
(an order on $\strings_l$ will be defined in the proof of the next lemma).
Let $\numnodes{l} = |\strings_l|$, namely, the number of nodes
at depth $l$ in the trie that represents the search $S$.
Let $\Psearch$ be the string containing the characters of $P$ according to the
order they are read by the search.
In other words, $\Psearch[l]$ is the character such that 
every node at depth $l-1$ of the trie has an edge to a child
with label $\Psearch[l]$.

\begin{lemma}\label{lem:Ali}
For every $l$ and $i$, the string $A_{l,i}$ is a random string with uniform
distribution.
\end{lemma}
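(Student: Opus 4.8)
The plan is to reduce the statement to two facts: that the sequence of characters the search reads is itself uniform and independent, and that whether a candidate string is enumerated depends on its characters only through their pattern of agreements with $\Psearch$. To make the second fact usable I would identify the alphabet with $\mathbb{Z}_\sigma=\{0,\ldots,\sigma-1\}$, so that adding a fixed constant modulo $\sigma$ is a bijection of the alphabet that carries the uniform distribution on a single character to itself. The first fact I would establish immediately: $\Psearch$ is obtained from $P$ by the fixed permutation of positions describing the order in which the search reads them, so $\Psearch$ is a length-$m$ string whose characters are uniform and independent, and in particular its prefix $\Psearch[1],\ldots,\Psearch[l]$ is uniform i.i.d.

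Next I would characterize $\strings_l$ in a character-free way. For a string $A$ of length $l$ define its mismatch pattern $\mu(A)\in\{0,1\}^l$ by $\mu(A)[j]=1$ iff $A[j]\neq\Psearch[j]$. Since $\hat T$ contains every string of length at most $m$, the node for $A$ appears at depth $l$ exactly when the partial sums of $\mu(A)$ respect the lower and upper bounds imposed by $L$ and $U$ at the boundaries of the parts processed so far. This admissibility condition refers only to $\mu(A)$, not to the actual letters of $A$ or of $\Psearch$; hence the set of admissible patterns $\mathcal{M}_l\subseteq\{0,1\}^l$ is fixed, independent of the realization of $P$.

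I would then refine this into the order on $\strings_l$ used in the next lemma. Encode each enumerated string $A$ by the pair $(\mu(A),\rho(A))$, where $\rho(A)$ assigns to each mismatch position $j$ the offset $\rho(A)[j]=(A[j]-\Psearch[j])\bmod\sigma\in\{1,\ldots,\sigma-1\}$. The set of admissible pairs is $\{(\mu,\rho): \mu\in\mathcal{M}_l,\ \rho \text{ an offset at each mismatch of }\mu\}$, which is again independent of $P$; in particular $\numnodes{l}=|\strings_l|$ is the same for every pattern. Fixing any order on these pairs orders $\strings_l$, and then $A_{l,i}$ is the string determined by the $i$-th pair $(\mu_i,\rho_i)$: that is, $A_{l,i}[j]=\Psearch[j]$ when $\mu_i[j]=0$ and $A_{l,i}[j]=(\Psearch[j]+\rho_i[j])\bmod\sigma$ when $\mu_i[j]=1$.

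Finally, for a fixed index $i$ each coordinate $A_{l,i}[j]$ is a deterministic shift of $\Psearch[j]$ by a constant $c_{i,j}\in\mathbb{Z}_\sigma$. As $\Psearch[j]$ is uniform on $\mathbb{Z}_\sigma$ and the coordinates of the prefix are independent, the shifted coordinates remain uniform and independent, so $A_{l,i}$ is uniform over all length-$l$ strings. The one delicate point, which I expect to be the crux, is the middle step: one must check that the admissible set of (pattern, offset) pairs truly does not depend on the drawn $P$, so that the index $i$ picks out the same combinatorial object across all realizations; once that is secured, uniformity follows purely from shift-invariance of the uniform distribution.
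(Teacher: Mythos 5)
Your proof is correct and follows essentially the same route as the paper's: the paper orders the children of each trie node so that the $i$-th enumerated string satisfies $A_{l,i}[j]=(\Psearch[j]+c_{i,j}-1)\bmod\sigma$ for constants $c_{i,j}$ determined by the (pattern-independent) combinatorics of the search, which is exactly your $(\mu,\rho)$ encoding, and then concludes by shift-invariance of the uniform distribution. The only difference is presentational — you make explicit the check that the admissible set of offset patterns does not depend on the realization of $P$, which the paper leaves implicit in the trie construction.
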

\begin{proof}
Assume that the alphabet is $\Sigma=\{0,\ldots,\sigma-1\}$.
Consider the trie that represents the search $S$.
We define an order on the children of each node of the trie as follows:
Let $v$ be a node in the trie with depth $l-1$.
The label on the edge between $v$ and its leftmost child is 
$\Psearch[l]$.
If $v$ has more than one child, the labels on the edges to the rest of the
children of $v$, from left to right,
are $(\Psearch[l]+1)\bmod \sigma,\ldots,(\Psearch[l]+\sigma-1)\bmod \sigma$.
We now order the set $\strings_l$ according to the nodes of depth
$l$ in the trie. Namely, let $v_1,\ldots,v_{\numnodes{l}}$
be the nodes of depth $l$ in the trie, from left to right.
Then, $A_{l,i}$ is the string that corresponds to $v_i$.
We have that $A_{l,i}[j]=(\Psearch[j]+c_{i,j}-1)\bmod \sigma$ for
$j = 1,\ldots,l$,
where $c_{i,j}$ is the rank of the node of depth $j$ on the path from the
root to $v_i$ among its siblings.
Now, since each letter of $P$ is randomly chosen uniformly and independently
from the alphabet, it follows that each letter of $A_{l,i}$ has uniform
distribution and the letters of $A_{l,i}$ are independent.
\end{proof}

By the linearity of the expectation,
\[ \numstrings(S,\partition,\sigma,n) =
 \sum_{l \geq 1} \sum_{i=1}^{\numnodes{l}}
 \Pr_{T\in \Sigma^n}[\text{$A_{l,i}$ is a substring of $T$}].
\]
By Lemma~\ref{lem:Ali} and Lemma~\ref{lem:substring-probability},
\begin{equation}\label{eq:a}
\numstrings(S,\partition,\sigma,n) = 
 \sum_{l=1}^{m} \numnodes{l} \cdot \prob{n}{l} \approx
 \sum_{l = 1}^{m} \numnodes{l} (1-e^{-n/\sigma^{l}}).
\end{equation}
We note that the bounds on the approximation errors of $\prob{n}{l}$
are small, therefore even when these bounds are multiplied by $\numnodes{l}$
and summed over all $l$, the resulting bound on the error is small.

In order to compute the values of $\numnodes{l}$, we give some definitions.
Let $\numnodesd{l}{d}$ be the number of strings in $\strings_l$
of length $l$ with Hamming distance $d$ to the prefix of $\Psearch$
of length $l$.
For example, consider search $S_{bd} = (231,001,012)$ and partition of 
a pattern of length 6 into 3 parts of length 2,
as shown in Figure~\ref{fig:trie-Sbd}.
Then, $\Psearch = \mathrm{baaaba}$, $\numnodesd{5}{0} = 0$,
$\numnodesd{5}{1} = 2$ (strings baabb and babab),
and $\numnodesd{5}{2} = 2$ (strings baaba and babaa).

Let $\pi_{\partition}$ be a string obtained from $\pi$ by replacing each
character $\pi(i)$ of $\pi$ by a run of $\pi(i)$ of length
$\partition[{\pi(i)}]$,
where $\partition[j]$ is the length of the $j$-th part in the partition
$\partition$.
Similarly, $L_{\partition}$ is a string obtained from
$L$ by replacing each character $L[i]$  by a run of $L[i]$ of length
$\partition[{\pi(i)}]$, and $U_{\partition}$ is defined analogously.
In other words, values $L_{\partition}[i],U_{\partition}[i]$
give lower and upper bounds on the number of allowed mismatches for an
enumerated string of length $i$.
For example, for $S_{bd}$ and the partition $\partition$ defined above,
$\pi_{\partition} = 223311$, $L_{\partition} = 000011$,
and $U_{\partition} = 001122$.

Values $\numnodes{l}$ are given by the following recurrence.
\begin{align}\label{eqn:nl}
\numnodes{l}  & = 
 \sum_{d = L_{\partition}[l]}^{U_{\partition}[l]} \numnodesd{l}{d}\\
\numnodesd{l}{d} & = \begin{cases}
\numnodesd{l-1}{d} + (\sigma-1) \cdot\numnodesd{l-1}{d-1} & 
\text{if $l \geq 1$ and $L_{\partition}[l] \leq d \leq U_{\partition}[l]$} \\
1 & \text{if $l = 0$ and $d = 0$}\\
0 & \text{otherwise}
\end{cases}
\end{align}
For a specific search, a closed formula can be given for $\numnodes{l}$.
If a search scheme $\scheme$ contains two or more searches with the same
$\pi$-strings, these searches can be merged in order to eliminate the
enumeration of the same string twice or more.
It is straightforward to modify the computation of $\numstrings(\scheme,\partition,\sigma,n)$ to account for this optimization.

Consider equation~(\ref{eq:a}).
The value of the term $1-e^{-n/\sigma^{l}}$ is very close to 1
for $l \leq \log_{\sigma}n - O(1)$.
When $l \geq \log_{\sigma}n$, the value of this term decreases
exponentially.
Note that $\numnodes{l}$ increases exponentially, but the base of the exponent
of $\numnodes{l}$ is $\sigma-1$ whereas the base of $1-e^{-n/\sigma^{l}}$ is
$1/\sigma$.
We can then approximate $\numstrings(S,\partition,\sigma,n)$ with function
$\numstringsp(S,\partition,\sigma,n)$ defined by
\begin{equation}\label{eq:a2}
\numstringsp(S,\partition,\sigma,n) =
\sum_{l = 1}^{\lceil\log_{\sigma} n\rceil+c_\sigma} \numnodes{l}\cdot
 (1-e^{-n/\sigma^{l}}),
\end{equation}
where $c_\sigma$ is a constant chosen so that $((\sigma-1)/\sigma)^{c_\sigma}$
is sufficiently small.

From the above formulas we have that the time complexities for
computing $\numstrings(\scheme,\partition,\sigma,n)$ and
$\numstringsp(\scheme,\partition,\sigma,n)$ are
$O(|\scheme|km)$ and $O(|\scheme|k \log_{\sigma} n)$, respectively.

\subsubsection{Edit distance}\label{sec:estimation-edit}
We now show how to estimate the efficiency of a search scheme for the edit distance.

We define $\numstringsedit$ analogously to $\numstrings$
in the previous section, except that edit distance errors are allowed.
Fix a search $S=(\pi,L,U)$ and a partition $\partition$.
We assume without loss of generality that $\pi$ is the identity permutation.
Similarly to the Hamming distance case,
define $\strings_l$ to be the set of enumerated strings of length $l$
when performing the search $S$ on a random pattern of length $m$,
partitioned by $\partition$,
and a text $\hat{T}$ containing all the strings of length at most $m+k$ as
substrings.
Unlike the case of Hamming distance, here the strings of $\strings_l$ are not
distributed uniformly.
Thus, we do not have the equality
$\numstringsedit(S,\partition,\sigma,n) = \sum_{l=1}^{m} \numnodes{l}
\cdot \prob{n}{l}$.
We will use $\sum_{l=1}^{m} \numnodes{l} \cdot \prob{n}{l}$ as an approximation
for $\numstringsedit(S,\partition,\sigma,n)$, but we do not have an estimation
on the error of this approximation.
Note that in the Hamming distance case, the sizes of the sets $\strings_l$ are
the same for every choice of the pattern, whereas this is not true for
edit distance.
We therefore define $\numnodesp{l}{P}$ to be the number of enumerated strings
of length $l$ when performing the search $S$ on a pattern $P$ of length $m$,
partitioned by $\partition$, and a text $\hat{T}$.
We also define $\numnodes{l}$ to be the expectation of $\numnodesp{l}{P}$,
where $P$ is chosen randomly.

We next show how to compute values $\numnodes{l}$.
We begin by giving an algorithm for computing $\numnodesp{l}{P}$ for
some fixed $P$.
Build a non-deterministic automaton $\automaton{P}$ that recognizes the
set of strings that are within edit distance at most $k$ to $P$, and the
locations of the errors satisfy the requirements of the
search~\cite{MihovSchulz,KarkkainenN07}
(see Figure~\ref{fig:NFA} for an example).
For a state $q$ and a string $B$, denote by $\transition{P}(q,B)$ the set of all
states $q'$ for which there is a path in $\automaton{P}$ from $q$ to $q'$
such that the concatenation of the labels on the path is equal to $B$.
For a set of states $Q$ and a string $B$,
$\transition{P}(Q,B)=\cup_{q\in Q} \transition{P}(q,B)$.
Clearly, $\numnodesp{l}{P}$ is equal to the number of strings $B$ of length $l$
for which $\transition{P}(q_0,B)\neq \emptyset$,
where $q_0$ is the initial state.
Let $\numnodesqp{l}{Q}{P}$ be the number of strings $B$ of length $l$ for which
$\transition{P}(q_0,B)=Q$.
The values of $\numnodesqp{l}{Q}{P}$ can be computed using dynamic programming
and the following recurrence.
\[
\numnodesqp{l}{Q}{P} = \sum_{c\in \Sigma}
\sum_{Q': \transition{P}(Q',c)=Q} \numnodesqp{l-1}{Q'}{P}.
\]
The values $\numnodesqp{l}{Q}{P}$ gives the values of
$\numnodesp{l}{P}$, since by definition,
\[
\numnodesp{l}{P} = \sum_{Q} \numnodesqp{l}{Q}{P},
\]
where the summation is done over all non-empty sets of states $Q$.

Note that for a string $B$ of length $l$, set $\transition{P}(q_0,B)$
is a subset of a set of $(k+1)^2$ states that depends on $l$. This set,
denoted $\states{l}$, includes the $l+1$-th state in the first row of the
automaton, states $l,l+1,l+2$ on the second row,
states $l-1,l,\ldots,l+3$ on the third row, and so on (see Figure~\ref{fig:NFA}).
The size of $\states{l}$ is $1+3+5+\cdots+(2k+1) = (k+1)^2$.
Therefore, the number of sets $Q$ for which $\numnodesqp{l}{Q}{P}>0$ is at most
$2^{(k+1)^2}$.
If $(k+1)^2$ is small enough, a state can be encoded in one machine word,
and the computation of $\transition{P}(Q',c)$ can be done in constant time using
precomputed tables.
Thus, the time for computing all values of $\numnodesqp{l}{Q}{P}$ is
$O(2^{k^2}\sigma m)$.

\begin{figure}
\centering
\includegraphics[scale=0.75]{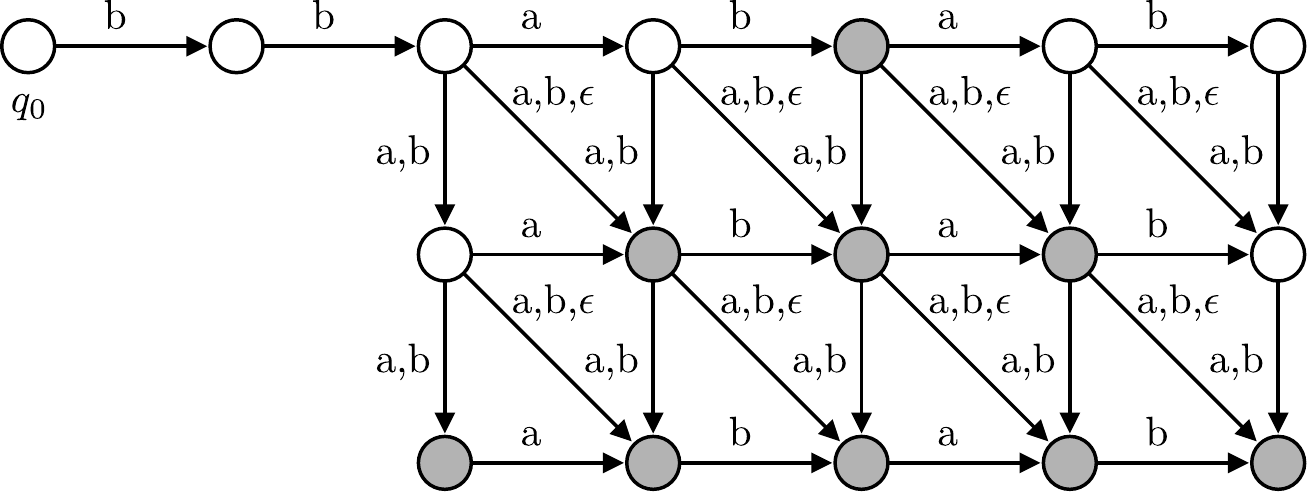}
\caption{Non-deterministic automaton corresponding
to the search $S=(12,00,02)$ and pattern $P=\mathrm{bbabab}$ over the alphabet
$\Sigma=\{\mathrm{a},\mathrm{b}\}$.
A path from the initial state $q_0$ to the state in the $i$-th row and
$j$-column of the automaton correspond to a string with edit distance
$i-1$ to $\substr{P}{1}{j-1}$.
The nodes of the set $\states{4}$ are marked by gray.\label{fig:NFA}}
\end{figure}

Now consider the problem of computing the values of $\numnodes{l}$.
Observe that for $Q\subseteq \states{l}$, the value of
$\transition{P}(Q,c)$ depends on the characters of $\substr{P}{l-k+1}{l+k+1}$,
and does not depend on the rest of the characters of $P$.
Our algorithm is based on this observation.
For an integer $l$, a set $Q \subseteq \states{l}$, and
a string $P'$ of length $2k+1$, define
\[ \numnodesqpb{l}{Q}{P'} = 
 \sum_{P:\substr{P}{l-k+1}{l+k+1}=P'} \numnodesqp{l}{Q}{P}.
\]
Then,
\[
 \numnodesqpb{l}{Q}{P'} = \sum_{c'\in \Sigma} \sum_{c\in \Sigma}
 \sum_{Q': \transition{P_c}(Q',c)=Q} \numnodesqpb{l-1}{Q'}{P'_c},
\]
where $P'_c = c'\substr{P'}{1}{2k}$, and
$P_c$ is a string satisfying $\substr{P_c}{(l-1)-k+1}{(l-1)+k+1} = P'_c$
(the rest of the characters of $P_c$ can be chosen arbitrarily).

From the above, the time complexity for computing
$\numstringsedit(S,\partition,\sigma,n)$ is
$O(|\scheme| 2^{k^2}\sigma^{2k+3} m)$.
Therefore, our approach is practical only for small values of $k$.

\iffull
\subsection{Uneven partitions}\label{sec:uneven}
In Section~\ref{sec:bidirectional}, we provided an informal explanation why
partitioning the pattern into unequal parts may be beneficial. 
We now provide a formal justification for this. 
To this end, we replace (\ref{eq:a2}) by an even simpler estimator of
$\numstrings(S,\partition,\sigma,n)$:
\begin{equation}
\numstringspp(S,\partition,\sigma,n) = \sum_{l = 1}^{\lceil\log_{\sigma}
n\rceil} \numnodes{l}.
\end{equation}

As an example, consider scheme $\schemelam$.
Denote by $x_1$, $x_2$, $x_3$ the lengths of the parts
in a partition $\partition$ of $P$ into 3 parts.
\ifextra
It is straightforward to give closed form formulas for the values of 
$\numnodes{l}$ for the searches of the scheme.
For example, for $S_f$ we have
\begin{align*}
\numnodesd{l}{0} &= 1 \\
\numnodesd{l}{1} &= \begin{cases}
0 & \text{if }l \leq x_1 \\
(\sigma-1)(l-x_1) & \text{otherwise}
\end{cases}\\
\numnodesd{l}{2} &= \begin{cases}
0 & \text{if }l \leq x_1 \\
(\sigma-1)^2 (l-x_1)(l-x_1-1)/2 & \text{otherwise}
\end{cases}
\end{align*}
(if $l > x_1$ then $\numnodesd{l}{2}
 =(\sigma-1)((\sigma-1)+2(\sigma-1)+\cdots+(l-x_1-1)(\sigma-1))$)
and therefore

\begin{equation}\label{eq:nl}
\numnodes{l} = \begin{cases}
1
 & \text{if }l \leq x_1\\
\frac{(\sigma-1)^2}{2}(l-x_1)^2+
 \left((\sigma-1)-\frac{(\sigma-1)^2}{2}\right)(l-x_1)+1
 & \text{otherwise}
\end{cases}
\end{equation}
For the backward search $S_b = (321,012,000)$,
\begin{align*}
\numnodesd{l}{0} &= 1 \\
\numnodesd{l}{1} &= \begin{cases}
0 & \text{if }l \leq x_3 \\
(\sigma-1)(l-x_3) & \text{otherwise}
\end{cases}\\
\numnodesd{l}{2} &= \begin{cases}
0 & \text{if }l \leq x_3 \\
(\sigma-1)^2 (l-x_1)(l-x_1-1)/2 & \text{otherwise}
\end{cases}
\end{align*}
(if $l > x_2+x_3$ then 
$\numnodesd{l}{2}=(\sigma-1)(x_2(\sigma-1)+(x_2+1)(\sigma-1)+\cdots+(l-x_2+x_3-1)(\sigma-1))$)
and therefore
\begin{equation}\label{eq:nl2}
\numnodes{l} = \begin{cases}
1
 & \text{if }l \leq x_3\\
(\sigma-1)(l-x_3)+1
 & \text{if }x_3 < l \leq x_2+x_3\\
\begin{aligned}
& \frac{(\sigma-1)^2}{2}(l-x_2-x_3)^2\\
& \quad + \left((\sigma-1)-(\sigma-1)^2(\frac{1}{2}+x_2)\right)
  (l-x_2-x_3)\\
& \quad + (\sigma-1)x_2+1
\end{aligned}
 & \text{otherwise}
\end{cases}
\end{equation}
Therefore
\else
It is straightforward to give closed formulas for 
$\numstringspp(S,\partition,\sigma,n)$ for each search of $\schemelam$.
For example,
\fi
\[
\numstringspp(S_f,\partition,\sigma,n) = \begin{cases}
\logn & \text{if }\logn \leq x_1\\
c_1(\logn-x_1)^3 + c_2(\logn-x_1)^2 + c_3(\logn-x_1) + \logn
& \text{otherwise}
\end{cases}
\]
where
$\logn = \lceil\log_{\sigma} n\rceil$,
$c_1 = (\sigma-1)^2/6$,
$c_2 = (\sigma-1)/2 $, and
$c_3 = -(\sigma-1)^2/6+(\sigma-1)/2$.
\ifextra
Similarly,
\[
\numstringspp(S_b,\sigma,n) = \begin{cases}
\logn & \text{if }\logn \leq x_3\\
c'_1(\logn-x_3)^2 + c'_2(\logn-x_3) + \logn
& \text{if }x_3 < \logn \leq x_2+x_3 \\
\begin{aligned}
&  c'_3(\logn-x_2-x_3)^3 + (c'_4+c'_5x_2)(\logn-x_2-x_3)^2\\
&+ (c'_6+c'_7x_2)(\logn-x_2-x_3) + \logn
\end{aligned}
& \text{otherwise}
\end{cases}
\]
where
$c'_1=(\sigma-1)^2/2$,
$c'_2=(\sigma-1)/2$,
$c'_3 = (\sigma-1)^2/6$,
$c'_4 = (\sigma-1)/2$,
$c'_5 = (\sigma-1)^2/2$,
$c'_6 = -(\sigma-1)^2/6+(\sigma-1)/2$, and
$c'_7 = (\sigma-1)^2/2+(\sigma-1)$.
\else
Similar formulas can be given for $S_b$ and $S_{bd}$.
\fi
If $x_1$, $x_2$, and $x_3$ are close to $m/3$ and $\logn < m/3$ then
$\numstringspp(\schemelam,\partition,\sigma,n,m) = 3\logn$ and an
equal sized partition is optimal in this case.
However, if $m/3 < \logn < 2m/3$, then
\begin{align*}
\numstringspp(\schemelam,&\partition,\sigma,n) =
   c_1(\logn-x_1)^3
 + c_2(\logn-x_1)^2
 + c_3(\logn-x_1)\\
& 
 + c'_1(\logn-x_3)^2
 + c'_2(\logn-x_3)
 + c''_1(\logn-x_2)^2
 + c''_2(\logn-x_2)
 +3\logn.
\end{align*}
It is now clear why the equal sized partition is not optimal in this case.
The degree of $\logn-x_1$ in the above polynomial is 3, while
the degrees of $\logn-x_2$ and $\logn-x_3$ are 2.
Thus, if $x_1 = x_2 = x_3 = m/3$, decreasing $x_2$ and $x_3$ by, say 1,
while increasing $x_1$ by 2 reduces the value of the polynomial.
\fi

\subsection{Computing an optimal partition}
In this Section, we show how to find an optimal partition for a given
search scheme $\scheme$ and a given number of parts $p$.
An optimal partition can be naively found by enumerating
all $\binom{m-1}{p-1}$ possible partitions, 
and for each partition $\partition$, computing
$\numstringsp(\scheme,\partition,\sigma,n)$.
We now describe a more efficient dynamic programming algorithm.

We define an optimal partition to be a partition that maximizes
$\numstrings(\scheme,\partition,\sigma,n)$.
Let $\lognc=\lceil\log_{\sigma} n\rceil+c_\sigma$.
If $m \geq p \lognc$, then any partition in which all parts are of size at least
$\lognc$ is an optimal partition.
Therefore, assume for the rest of this section that $m < p \lognc$.
We say that a partition $\partition$ is \emph{bounded} if the sizes of the parts
of $\partition$ are at most $\lognc$.
If $\partition$ is not bounded, we can transform it into a bounded
partition by decreasing the sizes of parts which are larger than $\lognc$
and increasing the sizes of parts which are smaller that $\lognc$.
This transformation 
can only decrease the value of
$\numstrings(\scheme,\partition,\sigma,n)$.
Therefore, there exists an optimal partition which is bounded.
Throughout this section we will consider only bounded partitions.
For brevity, we will use the term partition instead of bounded partition.

Our algorithm takes advantage of the fact that the value of
$\numstringsp(\scheme,\partition,\sigma,n)$ does not depend on the entire
partition $\partition$, but only on the partition of a substring of $P$
of length $\lognc$ induced by $\partition$.
More precisely, consider a fixed $S=(\pi,L,U)\in \scheme$.
By definition,
$\numstringsp(S,\partition,\sigma,n)$ depends on
the values $\numnodes{1},\ldots,\numnodes{\lognc}$ (the number of nodes in
levels $1,\ldots,\logn$ in the trie that correspond to the search $S$).
From Section~\ref{sec:estimation}, these values depend on
the strings $L$ and $U$ which are fixed, and on the string
$\substr{\pi_{\partition}}{1}{\lognc}$.
The latter string depends on $\substr{\pi}{1}{i_{\partition,\pi}}$,
where $i_{\partition,\pi}$ is the minimum index such that
$\sum_{j=1}^{i_{\partition,\pi}} X[\pi(j)] \geq \lognc$
and on the values 
$\partition[\pi(1)],\ldots,\partition[\pi(i_{\partition,\pi})]$.

The algorithm works by going over the prefixes of $P$ in increasing length
order. For each prefix $P'$, it computes a set of partitions of $P'$
such that at least one partition in this set can be extended to an optimal
partition of $P$.
In order to reduce the time complexity, the algorithm needs to identify
partitions of $P'$ that cannot be extended into an optimal partition of $P$.
Consider the following example.
Suppose that $m = 13$, $p = 5$, $\lognc = 4$ and $\scheme = \{S_1,S_2,S_3\}$,
where the $\pi$-strings of $S_1,S_2,S_3$ are $\pi^1 = 12345$, $\pi^2 = 32451$,
and $\pi^3 = 43215$, respectively.
Consider a prefix $P'=\substr{P}{1}{8}$ of $P$, and let
$\ppartition_1,\ppartition_2$ be two partitions of $P'$,
where the parts in $\ppartition_1$ are of sizes 3,3,2,
and the parts in $\ppartition_1$ are of sizes 4,2,2.
Note that $\ppartition_1$ and $\ppartition_2$ have the same number of parts,
and they induce the same partition on $\substr{P}{8-\lognc+1}{8}=\substr{P}{5}{8}$.
We claim that one of these two partitions is always at least as good
as the other for every extension of both partitions to a partition of $P$.
To see this, let $\spartition$ denote a partition of $\substr{P}{9}{13}$ into
two parts, and consider the three searches of $\scheme$.

\begin{enumerate}
\item
For search $S_1$ we have that 
$\substr{\pi^1_{\ppartition_1 \cup \spartition}}{1}{\lognc} = 1112$
for every $\spartition$, and 
$\substr{\pi^1_{\ppartition_2 \cup \spartition}}{1}{\lognc} = 1111$
for every $\spartition$.
It follows that the value of
$\numstringsp(S_1,\ppartition_1 \cup \spartition,\sigma,n)$ is the same
for every $\spartition$, and the value of 
$\numstringsp(S_1,\ppartition_2 \cup \spartition,\sigma,n)$ is the same
for every $\spartition$.
These two values can be equal or different.

\item
For the search $S_2$ we have that 
$\substr{\pi^2_{\ppartition_1 \cup \spartition}}{1}{\lognc} = 
 \substr{\pi^2_{\ppartition_2 \cup \spartition}}{1}{\lognc} = 3322$.
It follows that
$\numstringsp(S_2,\ppartition_1 \cup \spartition,\sigma,n) =
 \numstringsp(S_2,\ppartition_2 \cup \spartition,\sigma,n)$
for all $\spartition$ and this common value does not depend on $\spartition$.

\item
For the search $S_3$ we have that 
$\substr{\pi^3_{\ppartition_1 \cup \spartition}}{1}{\lognc} =
 \substr{\pi^3_{\ppartition_2 \cup \spartition}}{1}{\lognc}$
for every $\spartition$.
For example, if $\spartition$ is a partition of $\substr{P}{9}{13}$
into parts of sizes 2,2 then 
$\substr{\pi^3_{\ppartition_1 \cup \spartition}}{1}{\lognc} =
 \substr{\pi^3_{\ppartition_2 \cup \spartition}}{1}{\lognc} = 4433$.
It follows that
$\numstringsp(S_3,\ppartition_1 \cup \spartition,\sigma,n) =
 \numstringsp(S_3,\ppartition_2 \cup \spartition,\sigma,n)$
for every $\spartition$.
This common value depends on $\spartition$.

\end{enumerate}
We conclude that either 
$\numstringsp(\scheme,\ppartition_1 \cup \spartition,\sigma,n) < 
 \numstringsp(\scheme,\ppartition_2 \cup \spartition,\sigma,n)$
for every $\spartition$,
or $\numstringsp(\scheme,\ppartition_1 \cup \spartition,\sigma,n) \geq 
 \numstringsp(\scheme,\ppartition_2 \cup
 \spartition,\sigma,n)$
for every $\spartition$.

We now give a formal description of the algorithm.
We start with some definitions.
For a partition $\ppartition$ of a substring $P'=\substr{P}{m''}{m'}$ of 
pattern $P$, we define the following quantities:
$\plen{\ppartition}$ is the length of $P'$,
$\plast{\ppartition}$ is the length of the last part of $\ppartition$,
$\pnumparts{\ppartition}$ is the number of parts in $\ppartition$,
and $\prank{\ppartition}$ is the left-to-right rank of the part of
$\ppartition$ containing $P'[m'-\lognc+1]$.
Let $\pprefix{\ppartition}$ be the partition of 
$\substr{P}{m''}{m'-\plast{\ppartition}}$ of $P'$ that is
composed from the first $\pnumparts{\ppartition}-1$ parts of $\ppartition$.
For the example above,
$\plen{\ppartition_1} = 8$,
$\plast{\ppartition_1} = 2$,
$\pnumparts{\ppartition_1} = 3$,
$\prank{\ppartition_1} = 2$, and
$\pprefix{\ppartition_1}$ is a partition of $\substr{P}{1}{6}$ with
parts sizes $3,3$.

For a partition $\ppartition$ of a prefix $P'$ of $P$,
$\scheme(\ppartition)$ is a set containing every search $S\in \scheme$
such that $\prank{\ppartition}$ appears before
$\pnumparts{\ppartition}+1$ in the $\pi$-string of $S$.
If the length of $P'$ is less than $\lognc$ we define 
$\scheme(\ppartition)=\emptyset$, and if $P' = P$ we define
$\scheme(\ppartition)=\scheme$.
For the example above, $\scheme(\ppartition_1) = \{S_1,S_2\}$.

Let $\ppartition_1$ be a partition of a substring
$P_1=\substr{P}{i_1}{j_1}$ of $P$,
and $\ppartition_2$ be a partition of a substring $P_2=\substr{P}{i_2}{j_2}$.
We say that $\ppartition_1$ and $\ppartition_2$ are \emph{compatible}
if these partitions induce the same partition on the common substring
$P'=\substr{P}{\max(i_1,i_2)}{\min(j_1,j_2)}$.
For example, the partition of $\substr{P}{4}{6}$ into parts of sizes $1,2$
is compatible with the partition of $\substr{P}{1}{6}$ into parts of sizes
$2,2,2$.

\begin{lemma}\label{lem:optimal-1}
Let $\ppartition$ be a partition of a prefix of $P$ of length at least $\lognc$.
Let $S\in \scheme(\ppartition)$ be a search.
The value $\numstringsp(S,\partition,\sigma,n)$ is the same for every
partition $\partition$ of $P$ whose first $\pnumparts{\ppartition}$ parts
match $\ppartition$.
\end{lemma}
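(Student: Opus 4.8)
The plan is to reduce the statement to a claim about the length-$\lognc$ prefix $\substr{\pi_{\partition}}{1}{\lognc}$ of the run-length string $\pi_{\partition}$, and then to prove that claim from the connectivity property of $\pi$ together with the definition of $\scheme(\ppartition)$. First I would invoke the observation already recorded in this section: $\numstringsp(S,\partition,\sigma,n)$ is determined by the values $\numnodes{1},\dots,\numnodes{\lognc}$, and by the recurrence of Section~\ref{sec:estimation} these numbers depend only on the strings $L$ and $U$ of $S$ (which are fixed) and on the prefix $\substr{\pi_{\partition}}{1}{\lognc}$. Hence it suffices to show that this prefix is the same for every $\partition$ whose first $\pnumparts{\ppartition}$ parts match $\ppartition$. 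Writing $i^*$ for the least index with $\sum_{j=1}^{i^*}\partition[\pi(j)]\ge\lognc$, the prefix $\substr{\pi_{\partition}}{1}{\lognc}$ is determined by the parts $\pi(1),\dots,\pi(i^*)$ and their lengths, so the goal becomes to prove that every one of these parts has index at most $\pnumparts{\ppartition}$, i.e.\ is a part of $\ppartition$, whose length is then fixed by $\ppartition$.

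The key step uses the connectivity property, which guarantees that after each processing step the set of processed parts forms a contiguous interval of part indices that grows by one element (at either end) per step. Since $\ppartition$ is a prefix of $P$, every part lying outside $\ppartition$ has index greater than $\pnumparts{\ppartition}$, and $S\in\scheme(\ppartition)$ means exactly that $\prank{\ppartition}$ is processed before $\pnumparts{\ppartition}+1$. I would first deduce $\pi(1)\le\pnumparts{\ppartition}$: otherwise the starting part lies outside $\ppartition$, and reaching $\prank{\ppartition}\le\pnumparts{\ppartition}$ by left-extensions would force $\pnumparts{\ppartition}+1$ to be processed first, contradicting $S\in\scheme(\ppartition)$. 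Consequently the processed interval stays anchored inside $\ppartition$, so the first part of index exceeding $\pnumparts{\ppartition}$ to be processed is necessarily $\pnumparts{\ppartition}+1$, added as a right-extension of an interval whose maximum is exactly $\pnumparts{\ppartition}$. Letting $s$ be the step at which $\pnumparts{\ppartition}+1$ is first processed, the interval just before step $s$ equals $[a,\pnumparts{\ppartition}]$ for some $a$, and since $\prank{\ppartition}$ is processed before step $s$ it lies in this interval, giving $a\le\prank{\ppartition}$.

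Finally I would translate this interval back into a character count. The parts with indices in $[a,\pnumparts{\ppartition}]$ tile the text positions from the start of part $a$ up to position $\plen{\ppartition}$, the right end of the prefix. Because $a\le\prank{\ppartition}$ and, by definition, $\prank{\ppartition}$ is the part containing position $\plen{\ppartition}-\lognc+1$, the start of part $a$ is at most $\plen{\ppartition}-\lognc+1$, so this block has length at least $\lognc$. Thus at least $\lognc$ characters have already been processed before step $s$, whence $i^*<s$ and every part among $\pi(1),\dots,\pi(i^*)$ has index at most $\pnumparts{\ppartition}$; its length therefore coincides with the corresponding part of $\ppartition$, and $\substr{\pi_{\partition}}{1}{\lognc}$ is independent of the extension of $\ppartition$. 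I expect the delicate point to be precisely this last translation: the hypothesis $S\in\scheme(\ppartition)$ is phrased in terms of the processing order $\pi$, while the conclusion concerns how many characters are processed, and bridging the two requires combining the interval structure with the position-based definition of $\prank{\ppartition}$. The boundary cases ($\pi(1)$ itself outside $\ppartition$, or $\ppartition$ already partitioning all of $P$, in which case $\scheme(\ppartition)=\scheme$ and there is no extension to consider) then need to be checked, but they are routine.
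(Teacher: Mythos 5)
Your proof is correct and follows essentially the same route as the paper's: reduce to showing that $\substr{\pi_{\partition}}{1}{\lognc}$ is determined by $\ppartition$, use the connectivity property to show that all parts processed before part $\pnumparts{\ppartition}+1$ have index at most $\pnumparts{\ppartition}$ and include parts $\prank{\ppartition},\ldots,\pnumparts{\ppartition}$, and then use the position-based definition of $\prank{\ppartition}$ to conclude that at least $\lognc$ characters are processed before $i_{\partition,\pi}$ exceeds this range. You merely spell out the interval-growth argument that the paper compresses into its claims (1) and (2), so no substantive difference.
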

\begin{proof}
Let $i'$ be the index such that $\pi(i')=\pnumparts{\ppartition}+1$.
Since $\prank{\ppartition}$ appears before
$\pnumparts{\ppartition}+1$ in string $\pi$,
from the connectivity property of $\pi$ we have that
(1) Every value in $\pi$ that appears before $\pnumparts{\ppartition}+1$
is at most $\pnumparts{\ppartition}$.
In other words, $\pi(i) \leq \pnumparts{\ppartition}$ for every $i < i'$.
(2) $\prank{\ppartition},\ldots,\pnumparts{\ppartition}$ appear before
$\pnumparts{\ppartition}+1$ in $\pi$.
By the definition of $\prank{\ppartition}$,
$\sum_{j=\prank{\ppartition}}^{\pnumparts{\ppartition}} X[j] \geq \lognc$.
Therefore, $i_{\partition,\pi} < i'$ and 
$\pi(1),\ldots,\pi(i_{\partition,\pi}) \leq \pnumparts{\ppartition}$.
Thus, string $\substr{\pi}{1}{i_{\partition,\pi}}$ and values
$\partition[\pi(1)],\ldots,\partition[\pi(i_{\partition,\pi})]$
are the same for every partition $\partition$ that satisfies the requirement of
the lemma.
\end{proof}

For a partition $\ppartition$ of a prefix of $P$ of length at least $\lognc$,
define $v(\ppartition)$ to be 
$\sum_{S\in \scheme(\ppartition)} \numstringsp(S,\partition,\sigma,n)$,
where $\partition$ is an arbitrary partition of $P$ whose first
$\pnumparts{\ppartition}$ parts match $\ppartition$
(the choice of $\partition$ does not matter due to Lemma~\ref{lem:optimal-1}).
For a partition $\ppartition$ of a prefix of $P$ of length less than $\lognc$,
$v(\ppartition) = 0$.
Define
\[\Delta(\ppartition) = v(\ppartition) - v(\pprefix{\ppartition})
= \sum_{S\in \scheme(\ppartition)\setminus\scheme(\pprefix{\ppartition})}
\numstringsp(S,\partition,\sigma,n).
\]

\begin{lemma}\label{lem:optimal-2}
Let $\spartition$ be a partition of a substring $\substr{P}{m''}{m'}$
such that $\pnumparts{\spartition} \geq 2$ and
$\plen{\pprefix{\spartition}} = \min(\lognc,m'-\plast{\ppartition})$.
Let $p'\geq \pnumparts{\spartition}$ be an integer.
The value of $\Delta(\ppartition)$ is the same for
every partition $\ppartition$ of $\substr{P}{1}{m'}$ with $p'$ parts
that is compatible with $\spartition$.
\end{lemma}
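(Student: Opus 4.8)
The plan is to work from the displayed identity $\Delta(\ppartition)=\sum_{S\in\scheme(\ppartition)\setminus\scheme(\pprefix{\ppartition})}\numstringsp(S,\partition,\sigma,n)$ and to show that both the index set $\scheme(\ppartition)\setminus\scheme(\pprefix{\ppartition})$ and every summand are already determined by $\spartition$ and $p'$, hence are common to all compatible $\ppartition$. Put $\ell=\plast{\spartition}$; by compatibility every admissible $\ppartition$ has last part of length $\ell$, and $\spartition$ covers the substring $\substr{P}{m''}{m'}$ with $m''=\max(1,m'-\ell-\lognc+1)$. Writing $r=\prank{\ppartition}$ and $r'=\prank{\pprefix{\ppartition}}$, this region is exactly the block of parts $r',\ldots,p'$ of $\ppartition$, with only part $r'$ possibly truncated on its left. (When $m'-\ell<\lognc$ we have $m''=1$ and $\scheme(\pprefix{\ppartition})=\emptyset$, so $\spartition$ covers all of $\substr{P}{1}{m'}$ and the claim is immediate; assume henceforth $m''=m'-\ell-\lognc+1$.)

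First I would check that the membership data is local. By definition $\scheme(\ppartition)$ depends only on $r$ and $p'$, and $\scheme(\pprefix{\ppartition})$ only on $r'$ and $p'-1$. Moreover $r$ equals $p'$ minus the number of parts of $\ppartition$ meeting the length-$\lognc$ window $\substr{P}{m'-\lognc+1}{m'}$, plus one, and $r'$ admits the analogous expression for the window $\substr{P}{m'-\ell-\lognc+1}{m'-\ell}$. Both windows lie inside $\substr{P}{m''}{m'}$, so the two part-counts, and therefore $r$, $r'$, and the whole index set $\scheme(\ppartition)\setminus\scheme(\pprefix{\ppartition})$, are functions of $\spartition$ and $p'$ alone.

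The crux is to show that each summand is determined by $\spartition$. Fix $S=(\pi,L,U)$ in $\scheme(\ppartition)\setminus\scheme(\pprefix{\ppartition})$. Since $\numstringsp(S,\partition,\sigma,n)$ is computed from $\numnodes{1},\ldots,\numnodes{\lognc}$, which in turn depend only on the fixed strings $L,U$ and on the run-lengths $\partition[\pi(1)],\partition[\pi(2)],\ldots$ truncated at total length $\lognc$, it suffices to prove that the parts $\pi(1),\ldots,\pi(i_{\partition,\pi})$ read while the first $\lognc$ characters are processed all lie in the block $\{r',\ldots,p'\}$, and that their contribution to $\substr{L_{\partition}}{1}{\lognc}$ and $\substr{U_{\partition}}{1}{\lognc}$ is read off from $\spartition$. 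Let $[\alpha,\beta]$ be the interval of processed parts at step $i_{\partition,\pi}$, which is contiguous by the connectivity property. Exactly as in the proof of Lemma~\ref{lem:optimal-1}, membership $S\in\scheme(\ppartition)$ gives the right bound $\beta\le p'$. For the left bound I would invoke $S\notin\scheme(\pprefix{\ppartition})$, i.e.\ that $p'$ precedes $r'$ in $\pi$: assuming $\alpha<r'$ and distinguishing whether the part added at step $i_{\partition,\pi}$ is $\alpha$ or $\beta$ and whether $\beta=p'$, monotonicity of the two endpoints of $[\alpha,\beta]$ forces $r'$ to be processed before $p'$ in every case but one, contradicting that $p'$ precedes $r'$; the sole remaining case, in which the whole block $\{r',\ldots,p'\}$ is processed before the final left step, is impossible because $\sum_{j=r'}^{p'}\partition[j]\ge\lognc+\ell>\lognc$, so length $\lognc$ would have been reached strictly earlier. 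Hence $\alpha\ge r'$. Finally, if $\alpha=r'$ then the same ordering argument shows that part $r'$ is processed last, so only the number of its characters needed to reach total length $\lognc$ enters the profile, and this number is fixed by $\spartition$; every other processed part lies strictly inside $\substr{P}{m''}{m'}$ and its full length is read from $\spartition$. Consequently $\substr{L_{\partition}}{1}{\lognc}$, $\substr{U_{\partition}}{1}{\lognc}$, the values $\numnodes{1},\ldots,\numnodes{\lognc}$, and $\numstringsp(S,\partition,\sigma,n)$ are the same for all compatible $\ppartition$.

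I expect the left-bound argument to be the main obstacle: it is the only step that uses the hypothesis $S\notin\scheme(\pprefix{\ppartition})$ in an essential way, and it must be combined with the connectivity property and the length estimate $\sum_{j=r'}^{p'}\partition[j]\ge\lognc+\ell$ precisely to rule out the possibility that a search extends leftward beyond the $\spartition$-region before it has accumulated $\lognc$ characters. The remaining steps are routine manipulations of the definitions of $\prank{\cdot}$, $\pprefix{\cdot}$ and compatibility.
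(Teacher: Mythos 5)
Your proposal is correct and follows essentially the same route as the paper's proof: fix the difference set $\scheme(\ppartition)\setminus\scheme(\pprefix{\ppartition})$ via compatibility, then use $S\in\scheme(\ppartition)$ and $S\notin\scheme(\pprefix{\ppartition})$ together with connectivity to confine $\pi(1),\ldots,\pi(i_{\partition,\pi})$ to the range $[\prank{\pprefix{\ppartition}},\pnumparts{\ppartition}]$, whose part lengths are determined by $\spartition$. The paper merely asserts this confinement in one line, whereas you supply the connectivity/length argument (including the $\sum_{j=r'}^{p'}\partition[j]\geq\lognc+\ell$ estimate) that justifies the left bound; that added detail is consistent with, and a legitimate filling-in of, the paper's terse proof.
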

\begin{proof}
We assume $\lognc < m'-\plast{\ppartition}$ (the case
$\lognc \geq m'-\plast{\ppartition}$ is similar).
Since $\plen{\pprefix{\spartition}} = \min(\lognc,m'-\plast{\ppartition})$,
the set $\scheme(\ppartition)\setminus\scheme(\pprefix{\ppartition})$ is the
same for every partition $\ppartition$ of $\substr{P}{1}{m'}$ with $p'$ parts
that is compatible with $\spartition$.
For a search $S=(\pi,L,U)$ in this set, $\prank{\ppartition}$ appears before
$\pnumparts{\ppartition}+1$ in $\pi$,
and $\pnumparts{\ppartition}$ appears before $\prank{\pprefix{\ppartition}}$.
Let $i = i_{\partition,\pi}$, 
where $\partition$ is an arbitrary partition of $P$ whose first
$\pnumparts{\ppartition}$ parts are the parts of $\ppartition$.
We obtain that 
$\prank{\pprefix{\ppartition}} \leq
\pi(1),\ldots,\pi(i) \leq \pnumparts{\ppartition}$,
and the lemma follows.
\end{proof}

For $\spartition,p'$ that satisfy the requirements of
Lemma~\ref{lem:optimal-2}, let $\Delta(\spartition,p')$ denote the value of
$\Delta(\ppartition)$, where $\ppartition$ is an arbitrary partition of
$\substr{P}{1}{m'}$ with $p'$ parts that is compatible with $\spartition$.

For $m'\leq m$, $p' \leq p$, and a partition $\spartition$ of
$\substr{P}{\max(m'-\lognc+1,1)}{m'}$ with at most $p'$ parts,
let $\optv{m'}{p'}{\spartition}$ be the minimum value of $v(\ppartition)$,
where $\ppartition$ is a partition of $\substr{P}{1}{m'}$ into $p'$ parts
that is compatible with $\spartition$.
\begin{lemma}\label{lem:optimal-3}
For $m' \leq m$, $2 \leq p' \leq p$, and a partition $\spartition$ of
$\substr{P}{\max(m'-\lognc+1,1)}{m'}$ with at most $p'$ parts,
\[
\optv{m'}{p'}{\spartition} = \min_{\spartition'} \left(
  \optv{m'-\plast{\spartition'}}{p'-1}{\pprefix{\spartition'}} +
  \Delta(\spartition',p') 
\right)
\]
where the minimum is taken over all partitions $\spartition'$
of a substring $\substr{P}{m''}{m'}$ of $P$ that satisfy the following:
(1) $\spartition'$ is compatible with $\spartition$,
(2) $2 \leq \pnumparts{\spartition'} \leq p'$,
(3) $\plen{\pprefix{\spartition'}} = \min(\lognc, m'-\plast{\spartition'})$,
(4) $\pnumparts{\spartition} = p'$ if $m'' = 1$.
\end{lemma}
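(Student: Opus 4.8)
The plan is to establish the recurrence as two matching inequalities, built on the decomposition $v(\ppartition)=v(\pprefix{\ppartition})+\Delta(\ppartition)$ together with the locality of $\Delta$ furnished by Lemma~\ref{lem:optimal-2}. Write $W=\substr{P}{1}{m'}$. By definition, $\optv{m'}{p'}{\spartition}$ is the minimum of $v(\ppartition)$ over all partitions $\ppartition$ of $W$ into $p'$ parts that are compatible with $\spartition$; since $p'\ge 2$, each such $\ppartition$ has a nonempty prefix $\pprefix{\ppartition}$, and the defining identity for $\Delta$ splits the cost of $\ppartition$ into the prefix cost $v(\pprefix{\ppartition})$ and the increment $\Delta(\ppartition)$.

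To each $\ppartition$ I would associate its \emph{right window} $\spartition'$: the partition of the shortest substring $\substr{P}{m''}{m'}$ of $W$ that ends with the last part of $\ppartition$ and whose prefix $\pprefix{\spartition'}$ has length exactly $\min(\lognc,m'-\plast{\ppartition})$. By construction $\spartition'$ satisfies conditions (2) and (3) of the lemma, namely it uses at least two parts and $\plen{\pprefix{\spartition'}}=\min(\lognc,m'-\plast{\spartition'})$, while condition (4) records the boundary case $m''=1$, in which the window already covers all of $W$ and so must use all $p'$ parts. Since $\spartition'$ is a window of $\ppartition$ and $\ppartition$ is compatible with $\spartition$, the two agree on their overlap, which gives condition (1). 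Lemma~\ref{lem:optimal-2} then yields $\Delta(\ppartition)=\Delta(\spartition',p')$, a quantity depending only on $\spartition'$ and $p'$.

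For the direction $\optv{m'}{p'}{\spartition}\ge\min_{\spartition'}(\cdots)$, I would fix $\ppartition$ together with its window $\spartition'$. The prefix $\pprefix{\ppartition}$ partitions $\substr{P}{1}{m'-\plast{\spartition'}}$ into $p'-1$ parts and is compatible with $\pprefix{\spartition'}$, because $\pprefix{\spartition'}$ is precisely the right window of $\pprefix{\ppartition}$; hence $v(\pprefix{\ppartition})\ge\optv{m'-\plast{\spartition'}}{p'-1}{\pprefix{\spartition'}}$. Adding $\Delta(\spartition',p')$ gives $v(\ppartition)\ge\optv{m'-\plast{\spartition'}}{p'-1}{\pprefix{\spartition'}}+\Delta(\spartition',p')$, which is at least the right-hand side, and minimizing over $\ppartition$ proves this direction.

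For the reverse inequality, I would fix any $\spartition'$ meeting conditions (1)--(4) and a partition $\ppartition''$ of $\substr{P}{1}{m'-\plast{\spartition'}}$ into $p'-1$ parts, compatible with $\pprefix{\spartition'}$, that attains $\optv{m'-\plast{\spartition'}}{p'-1}{\pprefix{\spartition'}}$. Appending the last part of $\spartition'$ to $\ppartition''$ produces a partition $\ppartition$ of $W$ into $p'$ parts whose right window is $\spartition'$ and which is compatible with $\spartition$ (by condition (1) and the agreement of $\ppartition''$ with $\pprefix{\spartition'}$ on the window); thus $v(\ppartition)=v(\ppartition'')+\Delta(\spartition',p')$ equals the chosen summand, so the right-hand side is at least $\optv{m'}{p'}{\spartition}$, and the two inequalities give the recurrence. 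The main obstacle I anticipate is the window bookkeeping: verifying that conditions (2) and (3) pin down $\spartition'$ as a window of exactly the size on which Lemma~\ref{lem:optimal-2} makes $\Delta$ well defined, that passing between $\ppartition$ and the pair $(\spartition',\pprefix{\ppartition})$ preserves compatibility and the part counts, and that the degenerate case $m''=1$ in condition (4) is consistent with $\pprefix{\spartition'}$ serving as the recursive window.
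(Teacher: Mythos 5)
The paper states Lemma~\ref{lem:optimal-3} without any proof --- it proceeds directly to the algorithm that follows from it --- so there is no argument of the authors' to compare against; your proposal supplies the missing argument, and it is correct. It is the standard two-inequality verification of a dynamic-programming recurrence, resting on exactly the two facts the paper sets up for this purpose: the telescoping identity $v(\ppartition)=v(\pprefix{\ppartition})+\Delta(\ppartition)$ and the locality of $\Delta$ from Lemma~\ref{lem:optimal-2}. The window bookkeeping you flag as the main risk does go through: since $p'\ge 2$, every feasible $\ppartition$ has $m'-\plast{\ppartition}\ge 1$, so its right window has at least two parts and a prefix of length $\min(\lognc,m'-\plast{\ppartition})$, which is precisely the hypothesis under which Lemma~\ref{lem:optimal-2} makes $\Delta(\spartition',p')$ well defined and which makes the window an admissible candidate in the minimum (with condition (4) handling the degenerate case where the window already covers all of $\substr{P}{1}{m'}$). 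One justification is slightly off: in the first inequality you argue that $\pprefix{\ppartition}$ is compatible with $\pprefix{\spartition'}$ ``because $\pprefix{\spartition'}$ is precisely the right window of $\pprefix{\ppartition}$''; the cleaner and sufficient reason is that $\pprefix{\spartition'}$ is the partition that $\ppartition$ induces on $\substr{P}{m''}{m'-\plast{\spartition'}}$, which is exactly the domain prescribed for the subproblem $\optv{m'-\plast{\spartition'}}{p'-1}{\pprefix{\spartition'}}$, so $\pprefix{\ppartition}$ is a feasible candidate for it. With that small repair, and granting (as the paper implicitly does) that every partial partition under consideration extends to a bounded partition of all of $P$ so that $v$ is defined, the argument is complete.
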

An algorithm for computing the optimal partition follows from
Lemma~\ref{lem:optimal-3}.
The time complexity of the algorithm is
$O\big( {(|\scheme| k\lognc+m)}
\sum_{j=1}^{\min(p-1,\lognc)}(p-j)\binom{\lognc-1}{j-1}\big)$, where
$|\scheme| k\lognc
\sum_{j=1}^{\min(p-1,\lognc)}(p-j)\binom{\lognc-1}{j-1}$ is time for
computing $\Delta$ values, and $O\big(m
\sum_{j=1}^{\min(p-1,\lognc)}(p-j)\binom{\lognc-1}{j-1}\big)$ is time
for computing $v$ values. 

\section{Properties of optimal search schemes}\label{sec:design}

Designing an efficient search scheme for a given set of parameters consists
of
\begin{inparaenum}[(1)]
\item choosing a number of parts,
\item choosing searches, 
\item choosing a partition of the pattern.
\end{inparaenum}
While it is possible to enumerate all possible choices, and 
evaluate the efficiency of the resulting scheme using
Section~\ref{sec:estimation}, this is generally infeasible due to a large
number of possibilities.
It is therefore desirable to have a combinatorial characterization of
optimal search schemes.

The \emph{critical string} of a search scheme $\scheme$ is the
lexicographically maximal $U$-string of a search in $\scheme$.
A search of $\scheme$ is \emph{critical} if its $U$-string is equal to the
critical string of $\scheme$.
For example, the critical string of $\schemelam$ is $022$, and 
$S_f$ is the critical search. 
For typical parameters, critical searches of a search scheme constitute
the bottleneck.
Consider a search scheme $\scheme$, and assume that the $L$-strings
of all searches 
contain only zeros.
Assume further that the pattern is partitioned into 
equal-size parts.
Let $\ell$ be the maximum index such that for every search $S\in\scheme$
and every $i \leq \ell$, $U[i]$ of $S$ is no larger than
the number in position $i$ in the critical string of $\scheme$.
From Section~\ref{sec:analysis}, the number of strings enumerated by a search
$S\in \scheme$ depends 
mostly
on the prefix of the $U$-string of $S$
of length $\lceil \lceil\log_{\sigma} n\rceil/(m/p)\rceil$.
Thus, if $\lceil \lceil\log_{\sigma} n\rceil/(m/p)\rceil \leq \ell$,
a critical search enumerates an equal or greater number of strings than
a non-critical search.

We now consider the problem of designing a search scheme whose critical string
is minimal.
Let $\alpha(k,p)$ denote the lexicographically minimal critical string of a
$k$-mismatch search scheme that partitions the pattern into $p$ parts.
The next theorems give the values of $\alpha(k,k+2)$ and $\alpha(k,k+1)$.
\iffull
We need the following definition.
A string over the alphabet of integers is called \emph{simple}
if it contains a substring of the form $01^j0$ for $j\geq 0$.
\begin{lemma}\label{lem:simple}
\begin{itemize}
\item[(i)] Every string $A$ of weight $k$ and length at least $k+2$ is
  simple.
\item[(ii)] If $A$ is a non-simple string of weight $k$ and length $k+1$ then
$A[1]\leq 1$, $A[k+1]\leq 1$, and $A[i]\leq 2$ for all $2\leq i\leq k$.
Moreover, there are no two consecutive $2$'s in $A$.
\end{itemize}
\end{lemma}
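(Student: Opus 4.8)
The plan is to characterize non-simple strings combinatorially and then prove both parts by a single weight-counting argument, with (ii) emerging from the equality case of (i). Write $A$ as a sequence of non-negative integers of length $n$ (so $n=k+2$ for (i) and $n=k+1$ for (ii)) with $\sum_i A[i]=k$. Call a \emph{gap} a maximal block of positions lying strictly between two consecutive zeros of $A$; if $A$ has $t$ zeros it has exactly $t-1$ gaps, and by maximality every entry inside a gap is $\geq 1$. The key observation is that $A$ is non-simple if and only if no two zeros are adjacent and every gap contains at least one entry $\geq 2$: if some gap consisted only of $1$'s, or if two zeros were adjacent, the flanking zeros together with the intervening $1$'s would form a substring $01^j0$, making $A$ simple. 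Thus non-simplicity says precisely that each of the $t-1$ gaps hosts an entry that is at least $2$.

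The counting then proves (i). Let $s_1$ be the number of entries equal to $1$ and $s_2$ the number of entries $\geq 2$, so $n=t+s_1+s_2$. Since each big entry contributes at least $2$ to the weight, $k=\sum_i A[i] \geq s_1+2s_2=(n-t)+s_2$. Because the $t-1$ gaps are pairwise disjoint and each contains a distinct entry $\geq 2$, we have $s_2 \geq t-1$ (this also holds trivially when $t\leq 1$). Combining gives $k \geq (n-t)+(t-1)=n-1$, so every non-simple string has length at most $k+1$; contrapositively, any string of weight $k$ and length at least $k+2$ is simple.

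For (ii) I would exploit that when $n=k+1$ the inequality chain collapses to equalities. Tightness of $k \geq s_1+2s_2$ forces every big entry to equal exactly $2$, so all entries lie in $\{0,1,2\}$ and in particular $A[i]\leq 2$ for $2\leq i\leq k$. Tightness of $s_2 \geq t-1$ (note $t\geq 1$, since $t=0$ would give $k\geq n$) forces exactly one $2$ in each gap and no $2$ outside the gaps, so every $2$ lies strictly between two zeros. Consequently the first and last positions cannot equal $2$ (they lie in the prefix and suffix runs of nonzeros, which carry no $2$), giving $A[1]\leq 1$ and $A[k+1]\leq 1$; and two adjacent $2$'s would be nonzero neighbours sharing a single gap, contradicting the one-$2$-per-gap count, so there are no two consecutive $2$'s.

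The inequality chain is routine; the step needing the most care is the equality analysis in (ii)---verifying $t\geq 1$, treating the prefix and suffix runs correctly (they are not gaps, so they are exempt from the ``contains a $2$'' requirement, yet the exact count forces them to contain no $2$), and handling the boundary cases where a zero sits at position $1$ or $n$. I expect no deeper obstacle, since once the gap characterization is in place everything reduces to bookkeeping on the three inequalities.
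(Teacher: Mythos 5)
Your proof is correct, and it takes a genuinely different route from the paper's. The paper proves (i) by induction on $k$ (peel off a nonzero first letter, or locate the first non-$1$ entry after a leading $0$), and then derives (ii) by applying (i) to the substrings flanking a hypothetical entry $\geq 3$, a boundary entry $\geq 2$, or a pair of adjacent $2$'s, using a length-versus-weight pigeonhole on the two pieces. You instead prove a single extremal inequality: after characterizing non-simplicity as ``no two adjacent zeros and every gap between consecutive zeros contains an entry $\geq 2$,'' the count $k \geq (n-t)+s_2 \geq n-1$ shows every non-simple string satisfies $n \leq k+1$, which is (i); and (ii) falls out of the equality case when $n=k+1$. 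Your equality analysis is sound (the points needing care --- $t\geq 1$, the prefix and suffix runs containing no $2$, adjacent $2$'s necessarily sharing a gap --- are all handled correctly; the only cosmetic slip is asserting ``$t-1$ gaps'' when $t=0$, which you neutralize anyway by treating $t\leq 1$ separately). What your approach buys is a sharper structural conclusion than the lemma states: every $2$ sits alone in its own gap between consecutive zeros, with only $1$'s elsewhere, which is essentially the block decomposition $A=X0A_10A_2\cdots 0A_d0Y$ that the paper later re-derives inside the proof of Theorem~\ref{thm:alpha}. The paper's induction is shorter to write but yields only the stated bounds, not this characterization.
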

\begin{proof}
\emph{(i)}
The proof is by induction on $k$.
It is easy to verify that the lemma holds for $k=0$.
Suppose we proved the lemma for $k'<k$.
Let $A$ be a string of weight $k$ and length $p \geq k+2$.
If $A[1]\geq 1$ then by the induction hypothesis $\substr{A}{2}{p}$ is simple,
and therefore $A$ is simple.
Suppose that $A[1] = 0$.
Let $i>1$ be the minimum index such that $A[i] \neq 1$
($i$ must exist due to the assumption that $p\geq k+2$).
If $A[i] = 0$ then we are done.
Otherwise, we can use the induction hypothesis on $\substr{A}{i+1}{p}$
and obtain that $A$ is simple.

\emph{(ii)}
Let $A$ be a non-simple string of weight $k$ and length $k+1$.
If $A[1]\geq 2$ then $A'=\substr{A}{2}{k+1}$ has weight $k-A[1]\leq k-2$ and
length $k$, and thus by
\emph{(i)} we obtain that $A'$
is simple, contradicting the assumption that $A$ is non-simple.
Similarly, $A[k+1]$ cannot be greater than $1$.
For $2\leq i\leq k$, if $A[i] \geq 3$ then either $\substr{A}{1}{i-1}$ or
$\substr{A}{i+1}{k+1}$ satisfies the condition of 
\emph{(i)}.
Similarly, if $A[i]=A[i+1]=2$ then either $\substr{A}{1}{i-1}$ or
$\substr{A}{i+2}{k+1}$ satisfies the condition of 
\emph{(i)}.
\end{proof}
\else
We omit the proofs due to space limitations.
\fi

We use the following notation.
For two integers $i$ and $j$, $\seq{i}{j}$ denotes the string
$i(i+1)(i+2)\cdots j$ if $i \leq j$, and the empty string if $i > j$.
Moreover, $\seqr{i}{j}$ denotes the string $i(i-1)(i-2)\cdots j$ if $i \geq j$,
and the empty string if $i < j$.

\begin{theorem}\label{thm:alpha}
$\alpha(k,k+1) = 013355\cdots kk$ for every odd $k$, and
$\alpha(k,k+1) = 02244\cdots kk$ for every even $k$.
\end{theorem}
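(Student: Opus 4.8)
The plan is to recast $\alpha(k,k+1)$ as a purely combinatorial quantity and then prove matching bounds. Fix a weight-$k$ string $A$ of length $k+1$. For a search $S=(\pi,L,U)$, the connectivity property means that the first $i$ processed parts form a contiguous interval $I_i$ with $I_1\subset I_2\subset\cdots\subset I_{k+1}$, and (dropping $L$, which only shrinks coverage) $S$ covers $A$ exactly when $\sum_{j\in I_i}A[j]\le U[i]$ for all $i$. Call $\phi^S_A[i]=\sum_{j\in I_i}A[j]$ the \emph{profile} of $A$ under $S$, and let $\mu(A)$ be its lexicographically smallest value over all contiguous build-ups (equivalently, over all admissible $\pi$). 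The first step is to prove the reduction $\alpha(k,k+1)=\max_A\mu(A)$. Indeed, since the critical string of a scheme is the lexicographic maximum of its $U$-strings, any search covering $A$ has $U\succeq\phi^S_A\succeq\mu(A)$, so the critical string is $\succeq\mu(A)$ for every $A$; conversely the scheme $\{(\pi_A,0,\mu(A))\}_A$ covers all strings and has critical string exactly $\max_A\mu(A)$.

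For the lower bound I would exhibit one explicit non-simple string meeting the claimed value. For even $k$, take the string with $0$'s in the odd positions and $2$'s in the even positions; every length-$i$ interval contains at least $\lfloor i/2\rfloor$ even positions, so $\phi[i]\ge 2\lfloor i/2\rfloor=\alpha(k,k+1)[i]$ \emph{pointwise} for every build-up, whence $\mu=\alpha(k,k+1)$. For odd $k$, take $A=0\,1\,2\,0\,2\,0\cdots2\,0$; here the pointwise bound fails, so I argue lexicographically. No two zeros of $A$ are adjacent, so any profile beating $\alpha(k,k+1)$ would need $\phi[1]=0$ and $\phi[2]=1$, i.e.\ a zero start adjacent to the unique weight-$1$ part (position $2$); the only such zero is position $1$, and connectivity then forces the left-to-right sweep, whose profile is precisely $(0,1,3,3,\dots,k,k)=\alpha(k,k+1)$. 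Either way $\mu(A)=\alpha(k,k+1)$, so $\max_A\mu(A)\succeq\alpha(k,k+1)$.

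For the upper bound I must show $\mu(A)\preceq\alpha(k,k+1)$ for \emph{every} weight-$k$ string $A$ of length $k+1$, and here I would split on Lemma~\ref{lem:simple}. If $A$ is simple it contains a block $01^j0$; starting the build-up at one end of this block and sweeping across it makes the profile begin with $(0,1,2,\dots,j,j)$, which is lexicographically strictly smaller than the corresponding prefix of $\alpha(k,k+1)$, so $\mu(A)\prec\alpha(k,k+1)$. If $A$ is non-simple, Lemma~\ref{lem:simple}(ii) constrains it sharply: interior entries are $\le 2$, the two ends are $\le 1$, and no two $2$'s are adjacent, which together with non-simplicity forces a near-alternating arrangement of $0$'s and $2$'s with exactly one more zero than two and the remaining ones interspersed. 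For such $A$ I would choose the sweep that postpones absorbing each $2$ as long as connectivity permits and verify that the resulting profile is dominated by $\alpha(k,k+1)$, the fully alternating strings of the previous paragraph being the only profiles that meet $\alpha(k,k+1)$ with equality.

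The main obstacle is the non-simple case of the upper bound: converting the per-example checks into a single invariant showing that the postponing sweep satisfies $\phi[i]\le\alpha(k,k+1)[i]$ for \emph{all} $i$ at once. The genuine difficulty is that the intervals are nested, so one must commit to a single build-up rather than minimize each $\phi[i]$ independently. I expect to control this with an exchange argument on the rigid structure from Lemma~\ref{lem:simple}(ii), tracking the running slack $\alpha(k,k+1)[i]-\phi[i]$ as each zero and each two is absorbed and checking that it never goes negative; the parity split between odd and even $k$ should enter exactly through whether the first admissible step is forced to contribute $1$ or can be deferred.
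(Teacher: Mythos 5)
Your reduction $\alpha(k,k+1)=\max_A\mu(A)$ is correct, and it is a cleaner formulation than the paper's (which instead exhibits explicit finite schemes for the upper bound). Your lower-bound witnesses coincide with the paper's ($0202\cdots 20$ for even $k$, $012020\cdots 20$ for odd $k$) and your arguments for them are sound, as is the disposal of simple strings by sweeping across a $01^j0$ block. Your structural description of the non-simple strings (exactly one $2$ in each gap between consecutive zeros, since the weight count gives one more zero than twos, with the $1$'s interspersed arbitrarily) is also right. But the entire technical content of the theorem sits in the one step you have not carried out: showing $\mu(A)\preceq\alpha(k,k+1)$ for \emph{every} non-simple $A$. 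What you offer there is a plan ("I would choose the sweep that postpones absorbing each $2$ \dots I expect to control this with an exchange argument"), not a proof, and the plan is underspecified in a way that matters.

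Concretely, a build-up is a starting part plus a sequence of left/right extensions, and for non-simple strings the admissible starting zero is a global function of the string. For $k=5$ and $A=021020$ (zeros at positions $1,4,6$), starting at position $1$ or $6$ gives $\phi[2]=2\succ\alpha[2]=1$, and only the start at the middle zero achieves $013355$; likewise for $k=6$ and $A=0210120$ only the middle zero works. "Postpone each $2$" does not select the start, and the correct choice depends on the left/right orientation pattern of the $2$'s relative to the interspersed $1$'s --- this is precisely what drives the paper into its case analysis on blocks of type $12$ versus $21$ and their alternation (odd $k$), and into a separate recursive scheme construction (even $k$). A further warning sign: your side claim that the fully alternating witnesses are the \emph{only} strings meeting $\alpha(k,k+1)$ with equality is false (e.g.\ $020120$ for $k=5$ also has $\mu(A)=013355$), so the slack $\alpha[i]-\phi[i]$ you propose to track hits zero at many points for many strings, and the invariant you hope for must be correspondingly delicate. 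Until the non-simple case is closed, the upper bound --- and hence the theorem --- is unproven.
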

\iffull
\begin{proof}
We first give an upper bound on $\alpha(k,k+1)$ for odd $k$.
We build a search scheme as follows.
The scheme contains searches 
$S_{k,i,j} = (\seq{i}{k+2}\seqr{i-1}{1},0\cdots 0,\seq{0}{j}jk\cdots k)$
for all $i$ and $j$, which cover all simple strings
of weight $k$ and length $k+1$.
In order to cover the non-simple strings, the scheme contains the following
searches.
\begin{enumerate}
\item $S_{k,i,j}^{1}=(\seq{i}{k+1}\seqr{i-1}{1},
0\cdots 0,013355\cdots jj(j+1)k\cdots k)$ for every odd $3 \leq j\leq k$
(for $j=k$, the $U$-string is $013355\cdots kk$).
\item $S_{k,i,j}^{2}=(\seqr{i}{1}\seq{i+1}{k+1},
0\cdots 0,013355\cdots jj(j+1)k\cdots k)$ for every odd $3 \leq j\leq k$
(for $j=k$, the $U$-string is $013355\cdots kk$).
\end{enumerate}

Let $A$ be a non-simple string of weight $k$ and length $k+1$.
By Lemma~\ref{lem:simple}, $A=X0A_10A_20\cdots 0 A_d0Y$ where each of $X$ and $Y$
is either string $1$ or empty string, and each $A_i$ is either
$2$, $12$, $21$, or $121$.
A string $A_i$ is called a \emph{block} of type~1, 2, or~3 if $A_i$ is
equal to $12$, $21$, or $121$, respectively.
Let $B_1,\ldots,B_{d'}$ be the blocks of type~1 and type~2, from left to right.

We consider several cases.
The first case is when $X$ and $Y$ are empty strings, and $B_1$ is of type~1.
Since the weight of $A$ is odd, it follows that $d'$ is odd.
If $A$ has no other blocks, $A$ is covered by search
$S_{k,i,k}^{1}$, where $i+1$ is the index in $A$ in which $B_1$ starts.
Otherwise, if $B_2$ is of type~1, then $A$ is covered by search
$S_{k,i,j}^{1}$, where $i+1$ is the index in $A$ in which $B_1$ starts,
and $i+j+1$ is the index in which the first block to the right of $B_1$ starts
(this block is either $B_2$, or a block of type~3).
Now suppose that $B_2$ is of type~2.
If $B_3$ is of type~2, then $A$ is covered by search
$S_{k,i,j}^{2}$, where $i-1$ is the index in $A$ in which $B_3$ ends,
and $i-j-1$ is the index in which the first block to the left of $B_3$ ends.
By repeating these arguments, we obtain that $A$ is covered unless
the types of $B_1,\ldots,B_{d'}$ alternate between type~1 and type~2.
However, since $d'$ is odd, $B_{d'}$ is of type~1, and in this case
$A$ is covered by $S_{k,i,j}^{1}$,
where $i+1$ is the index in $A$ in which $B_1$ starts,
and $k-j$ is the index in which the first block to the left of $B_1$ ends.

Now, if $X$ is empty string and $Y=1$, define a string $A'=A20$.
By the above, $A'$ is covered by some search $S_{k+2,i,j}^{j'}$.
Then, $A$ is covered by either $S_{k,i,j}^{j'}$ or $S_{k,i,j-2}^{j'}$.
The same argument holds for the case when $X=1$.
The proof for the case when $B_1$ is of type~2 is analogous and thus omitted.

The lower bound on $\alpha(k,k+1)$ for odd $k$ is obtained by considering
the string $A=012020\cdots 20$. The $U$-string of a search that covers $A$
must be at least $013355\cdots kk$.

We next give an upper bound on $\alpha(k,k+1)$ for even $k$.
We define $k$-mismatch search schemes $\scheme_k$ recursively.
For $k=0$, $\scheme_0$ consists of a single search $S_{0,1}=(1,0,0)$.
For $k\geq 2$, $\scheme_k$ consists of the following searches.
\begin{enumerate}
\item For every search $S_{k-2,i}=(\pi,0\cdots 0,U)$ in $\scheme_{k-2}$,
$\scheme_k$ contains a search
$S_{k,i}=(\pi\cdot k(k+1), 0\cdots 0, U\cdot kk)$.
\item A search
$S_{k,k} = (\seqr{k+1}{1}, 0\cdots 0, 01kk\cdots k)$.
\item A search
$S_{k,k+1} = (k(k+1)\seqr{k-1}{1}, 0\cdots 0, 01kk\cdots k)$.
\end{enumerate}
Note that the critical string of $\scheme_k$ is $02244\cdots kk$
corresponding to item 1 above.
We now claim that all number strings of length $k+1$ and weight at most $k$
are covered by the searches of $\scheme_k$.
The proof is by induction on $k$.
The base $k=0$ is trivial.
Suppose the claim holds for $k-2$.
Let $A$ be a number string of length $k+1$ and weight $k'\leq k$.
If $A[k]+A[k+1]\leq 1$, then $A$ is covered by either $S_{k,k}$ or $S_{k,k+1}$.
Otherwise, the weight of $A' = \substr{A}{1}{k-1}$ is at most
$k'-2$.
By induction,
$A'$ is covered by some search
$S_{k-2,i}$. Then search $S_{k,i}$ covers $A$.

To prove that $\alpha(k,k+1) \geq 02244\cdots kk$ for even $k$,
consider the string $A=0202\cdots 020$.
It is easy to verify that the $U$-string of a search that covers $A$
must be at least $02244\cdots kk$.
\end{proof}
\fi
\begin{theorem}\label{thm:alpha2}
$\alpha(k,k+2) = 0123\cdots (k-1)kk$ for every $k\geq 1$.
\label{theorem2}
\end{theorem}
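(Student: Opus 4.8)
The plan is to prove the theorem by establishing matching upper and lower bounds on $\alpha(k,k+2)$, following the same two-pronged strategy used in Theorem~\ref{thm:alpha}. For the upper bound, I would exhibit an explicit $k$-mismatch search scheme that partitions the pattern into $k+2$ parts and whose critical string is exactly $0123\cdots(k-1)kk$; for the lower bound, I would produce a single string $A$ of weight $k$ such that any search covering $A$ must have a $U$-string lexicographically at least $0123\cdots(k-1)kk$. Since $\alpha$ is defined as the lexicographically minimal critical string over all valid schemes, these two bounds together pin down the value.

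The key observation driving the upper bound is Lemma~\ref{lem:simple}(i): since the partition now has $k+2$ parts, \emph{every} string of weight $k$ and length $k+2$ is simple, i.e.\ contains a substring of the form $01^j0$. This is precisely the extra structural leverage that $k+2$ parts buys over $k+1$ parts, and it should make the scheme construction considerably cleaner than in the odd/even case analysis of Theorem~\ref{thm:alpha}. I would build searches indexed by the location and length of such a $01^j0$ block: for each way a simple string can place its ``valley'' of ones, design a search $S=(\pi,0\cdots 0,U)$ whose $\pi$-string processes the parts so that the mismatch budget opens up gradually as $0,1,2,\ldots$, reaching its maximum only late in the search, yielding the critical $U$-string $0123\cdots(k-1)kk$. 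The connectivity property of $\pi$ must be respected throughout, so the permutations will either scan outward from the valley or proceed monotonically, as in the $S^1_{k,i,j}$ and $S^2_{k,i,j}$ searches of the previous proof. I would then verify by a case analysis (or induction on $k$, mirroring the recursive scheme $\scheme_k$) that every simple string of weight $k$ and length $k+2$ is covered.

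For the lower bound, I would select the alternating ``comb'' string analogous to $012020\cdots20$ from Theorem~\ref{thm:alpha}: here a natural candidate is $A=0123\cdots(k-1)kk$ read as weight contributions, or more likely a string whose partial sums force the bound position-by-position. The argument is that the covering condition $\sum_{j=1}^{i}A[j]\le U[i]$ forces $U[i]$ to be at least the $i$-th partial weight of $A$, and by choosing $A$ so that these partial sums realize $0,1,2,\ldots,(k-1),k,k$ in the tightest possible way, no search can undercut the claimed critical string. I would also need to confirm that $A$ genuinely has weight $k$ and length $k+2$ so that it must be covered by \emph{some} search in any valid scheme.

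The main obstacle I anticipate is the upper-bound construction: one must guarantee simultaneously that (a) the collection of searches covers every weight-$k$ string of length $k+2$, not merely the simple ones in isolation but all of them uniformly, (b) every $\pi$-string satisfies the connectivity property, and (c) no search has a $U$-string exceeding $0123\cdots(k-1)kk$ lexicographically, so that the critical string is not accidentally inflated. Balancing the ramp-up $0,1,2,\ldots$ of the mismatch budget against the need to cover valleys $01^j0$ appearing at arbitrary positions is delicate, and I expect the bookkeeping of which search catches which string — especially boundary cases where the valley touches the first or last part — to be where the real work lies, just as the repeated alternation argument was the crux in Theorem~\ref{thm:alpha}.
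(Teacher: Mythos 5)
Your plan is essentially the paper's proof: the upper bound comes from the family of searches $S_{k,i,j} = (\seq{i}{k+2}\,\seqr{i-1}{1},\,0\cdots 0,\,\seq{0}{j}jk\cdots k)$ indexed by the position and length of the $01^j0$ substring guaranteed by Lemma~\ref{lem:simple}(i), and the lower bound comes from a single witness string whose partial sums in any connectivity-respecting processing order force $U \geq 0123\cdots(k-1)kk$. The paper's witness is $A=011\cdots 110$, which is exactly what your ``partial sums realize $0,1,\ldots,k,k$ as tightly as possible'' criterion produces (note your first candidate $0123\cdots(k-1)kk$ does not have weight $k$, so only your fallback works).
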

\iffull
\begin{proof}
We first give an upper bound on $\alpha(k,k+1)$.
We build a $k$-mismatch search scheme $\scheme$ that contains
searches
$S_{k,i,j} = (\seq{i}{k+2}\seqr{i-1}{1},0\cdots 0,\seq{0}{j}jk\cdots k)$
for all $i$ and $j$.
Let $A$ be a string of weight $k$ and length $k+2$.
By Lemma~\ref{lem:simple} there are indices $i$ and $j$ such that
$\substr{A}{i}{i+j+1}=01^j0$, and therefore $A$ is covered by
$S_{k,i,j}$.

The lower bound is obtained from the string $A=011\cdots 110$.
It is easy to verify that the $U$-string of a search that covers $A$
must be at least $0123\cdots (k-1)kk$.
\end{proof}
\fi

An important consequence of Theorems~\ref{thm:alpha} and
\ref{theorem2} is that for some typical cases, partitioning the pattern
into $k+2$ parts brings an advantage over $k+1$ parts. 
For $k=2$, for example, we have $\alpha(2,3)=022$ while $\alpha(2,4)=0122$.
Since the second element of $0122$ is smaller than that
of $022$, a 4-part search scheme potentially enumerates
less strings than a 3-part scheme.
On the other hand, the average length of a part is smaller when using 4 parts,
and therefore the branching occurs earlier in the searches of a 4-part scheme.
The next section shows that for some parameters, $(k+2)$-part schemes
outperform $(k+1)$-part schemes,
while for other parameters the inverse occurs.

\section{Case studies}\label{sec:experiments}
In this Section, we provide results of several computational
experiments we have performed to analyse practical applicability
of our techniques. 

We designed search schemes for 2, 3 and 4 errors (given in Appendix) using a
greedy algorithm. The algorithm iteratively adds searches to a search
scheme. 
At each step, the algorithm considers the uncovered string $A$ of weight $k$
such that the lexicographically minimal $U$-string that covers $A$
is maximal.
Among the searches that cover $A$ with minimal $U$-string, a search that
covers the maximum number of uncovered strings of weight $k$ is chosen.
The $L$-string of the search is chosen to be lexicographically maximal among
all possible $L$-string that do not decrease the number of uncovered strings.
For each search scheme and each choice of parameters, we computed an
optimal partition. 

\subsection{Numerical comparison of search schemes}
\label{subsec:experimenttheor}
We first performed a comparative estimation of the efficiency of
search schemes using the method of Section~\ref{sec:hamming-estimation}
(case of Hamming distance). More precisely, for a given search scheme $\scheme$,
we estimated the number of strings
$\numstrings(\scheme,\partition,\sigma,n)$ enumerated during the search. 

Results for $2$ mismatches are given in Table~\ref{tab:sigma4} and
Table~\ref{tab:sigma30} for $4$-letter and $30$-letter alphabets
respectively.
Table~\ref{tab:nonuniform} contains estimations for nonuniform letter
distribution.
Table~\ref{tab:k3} contains estimations for $3$
mismatches for $4$-letter alphabet.
%

We first observe that our method provides an advantage
only on a limited range of pattern lengths. This conforms to our analysis 
\iffull
(see Section~\ref{sec:uneven})
\else
(details omitted due to lack of space)
\fi
that implies that our schemes can bring an improvement when $m/(k+1)$ is smaller than
$\log_{\sigma} n$ approximately. 
When $m/(k+1)$ is small, Tables~\ref{tab:sigma4}--\ref{tab:k3} suggest that
using more parts of unequal size can
bring a significant improvement. 
%
For big alphabets (Table~\ref{tab:sigma30}),
we observe a larger gain in efficiency,
due to the fact that values 
$\numnodes{l}$ (see equation~(\ref{eqn:nl})) grow faster when the
alphabet is large, and thus a change in the size of parts can
have a bigger influence on these values.
Moreover, if the probability distribution of letters in both the text and
the pattern is nonuniform,
then we obtain an even larger gain (Table~\ref{tab:nonuniform}), since in this case, the
strings enumerated during the search have a larger probability to
appear in the text than for the uniform distribution. 

For $3$ mismatches and $4$ letters (Table~\ref{tab:k3}), 
we observe a smaller gain, and even a loss for pattern lengths $36$ and
$48$ when shifting from $4$ to $5$ parts.
This is explained by Theorem~\ref{thm:alpha} showing the
difference of critical strings between odd and even numbers of
errors. 
Thus, for $3$ mismatches and $4$ parts, the
critical string is $0133$ 
while for $5$ parts it is $01233$. 
When patterns are not too small, the latter does not lead to an improvement strong enough
to compensate for the decrease of part length. 
Note that the situation is different for even number of errors, where incrementing the number of parts from
$k+1$ to $k+2$ leads to transforming the critical strings from
$0224\cdots$ to $0123\cdots$. 

Another interesting observation is that with $4$ parts, obtained
optimal partitions have equal-size parts, as the $U$-strings
of all searches of the 4-part scheme are all the same (see Appendix). 


These estimations suggest that our techniques can 
bring a
significant gain in efficiency for some parameter ranges, however the
design of a search scheme should be done carefully for each specific
set of parameters.

\begin{table}[!tb]
\caption{Values of $\numstrings(\scheme,\partition,4,4^{16})$ for $2$-mismatch
search schemes, for different pattern lengths $m$. 
Second column corresponds to 
search scheme $\schemelam$ with three equal-size parts, 
the other columns show results for unequal partitions and/or more
parts. The partition used is shown in the second sub-column. 
\label{tab:sigma4}}
\centering
\begin{tabular}{|c|c|c|c|c|c|c|c|}
\hline
$m$ & 3 equal & \multicolumn{2}{|c|}{3 unequal} &
\multicolumn{2}{|c|}{4 unequal} & \multicolumn{2}{|c|}{5 unequal} \\
\hline
24 & 1197 &  1077 & 9,7,8    &  959 & 7,4,4,9   &  939 & 7,1,6,1,9 \\
36 &  241 &   165 & 15,10,11 &  140 & 12,5,7,12 &  165 & 11,1,9,1,14 \\
48 &   53 &    53 & 16,16,16 &   51 & 16,7,9,16 &   53 & 16,1,15,1,15 \\
\hline
\end{tabular}
\end{table}
\begin{table}[!tb]
\caption{Values of $\numstrings(\scheme,\partition,30,30^{7})$ for $2$-mismatch
search schemes.\label{tab:sigma30}}
\centering
\begin{tabular}{|c|c|c|c|c|c|c|c|}
\hline
$m$ & 3 equal & \multicolumn{2}{|c|}{3 unequal} &
\multicolumn{2}{|c|}{4 unequal} & \multicolumn{2}{|c|}{5 unequal} \\
\hline
15 &  846 &  286 & 6,4,5 &  231 & 5,2,3,5 &  286 & 5,1,3,1,5 \\
18 &  112 &  111 & 7,6,5 &   81 & 6,2,4,6 &  111 & 6,1,4,1,6 \\
21 &   24 &   24 & 7,7,7 &   23 & 7,3,4,7 &   24 & 7,1,6,1,6 \\
\hline
\end{tabular}
\end{table}
%
%
\begin{table}[!tb]
\caption{Values of $\numstrings(\scheme,\partition,4,4^{16})$ for $2$-mismatch
search schemes, using a non-uniform letter distribution (one letter with
probability $0.01$ and the rest with probability $0.33$
each).\label{tab:nonuniform}}
\centering
\begin{tabular}{|c|c|c|c|c|c|c|c|}
\hline
$m$ & 3 equal & \multicolumn{2}{|c|}{3 unequal} &
\multicolumn{2}{|c|}{4 unequal} & \multicolumn{2}{|c|}{5 unequal} \\
\hline
24 & 3997 &  3541 & 10,8,6   & 3592 & 6,7,1,10  & 3541 & 6,1,7,1,9 \\
36 &  946 &   481 & 16,10,10 &  450 & 11,6,6,13 &  481 & 10,1,9,1,15 \\
48 &  203 &   157 & 18,15,15 &  137 & 16,7,9,16 &  157 & 15,1,14,1,17\\
\hline
\end{tabular}
\end{table}
%
%
\begin{table}[!tb]
\caption{Values of $\numstrings(\scheme,\partition,4,4^{16})$ for $3$-mismatch
search schemes. Best partitions obtained for 4 parts are equal.\label{tab:k3}}
\centering
\begin{tabular}{|c|c|c|c|c|}
\hline
$m$ & \multicolumn{2}{|c|}{4 equal/unequal} & \multicolumn{2}{|c|}{5 unequal} \\
\hline
24 & 11222 & 6,6,6,6 &  8039 & 4,6,5,1,8 \\
36 &   416 & 9,9,9,9 &   549 & 6,11,5,1,13 \\
48 &   185 & 12,12,12,12 &   213 & 11,11,11,1,14 \\
\hline
\end{tabular}
\end{table}

\subsection{Experiments on genomic data}\label{subsec:experimentpract}

To perform large-scale experiments on genomic sequences, we
implemented our method using the {\sc 2BWT} library
provided by~\cite{LamLTWWY09}
(\url{http://i.cs.hku.hk/2bwt-tools/}). We then experimentally compared
different search schemes, both in terms of running time and average
number of enumerated substrings. Below we only report 
running time, as in all cases, the number of enumerated substrings
produced very similar results. 

\begin{sloppypar}
The experiments were done on the sequence of human chromosome
14 (\emph{hr14}). The sequence is $88\cdot 10^6$ long, with nucleotide distribution
29\%, 21\%, 21\%, 29\%.
Searched patterns were generated as i.i.d.\ sequences. For every search
scheme and pattern length, we ran $10^5$ pattern searches for Hamming distance
and $10^4$ searches for the edit distance.
\end{sloppypar}

\subsubsection{Hamming distance}

For the case of 2 mismatches, we implemented the 3-part and 4-part schemes
(see Appendix), as well as their equal-size-part
versions for comparison. For each pattern length, we computed an
optimal partition, taking into account a non-uniform distribution of
nucleotides. Results are presented in Table~\ref{tab:times2}.

Using unequal parts for 3-part schemes yields a notable time decrease 
for patterns of length $24$ and $33$ (respectively, by 24\% and 16\%). 
Furthermore, we observe that
using unequal part lengths for 4-part schemes is beneficial as
well. For pattern lengths $24$ and $33$, we obtain a speed-up by
27\% and 28\% respectively.
Overall, the experimental results are consistent with numerical
estimations of Section~\ref{subsec:experimenttheor}. 

For the case of 3 mismatches, we implemented 4-part and 5-part schemes
from Appendix, as well as their equal part
versions for comparison. Results (running time) are presented in
Table~\ref{tab:times3}. In accordance with estimations of
Section~\ref{subsec:experimenttheor}, here we observe a clear
improvement only for pattern length $15$ and not for longer
patterns.

\begin{table}[!tb]
\caption{Total time (in sec) of search for $10^5$ patterns
in \emph{hr14}, up to 2 mismatches.
2nd column contains time obtained on partition into three equal-size parts. 
The 3rd (respectively 4th and 5th) column shows the running time
respectively for the $3$-unequal-parts, 
$4$-equal-parts and
$4$-unequal-parts searches, together with their ratio (\%) to the
corresponding $3$-equal-parts value.
\label{tab:times2}}
\centering
\begin{tabular}{|c|c|c|c|c|c|c|}
\hline
$m$ & 3 equal & \multicolumn{2}{|c|}{3 unequal} & 4 equal & \multicolumn{2}{|c|}{4 unequal} \\
\hline
15 & 24.8 & 25.4 (102\%) & 6,6,3 & 25.3 (102\%) & 25.3 (102\%) & 3,5,1,6 \\
24 & 5.5 & 4.2 (76\%)  & 10,7,7 & 5.2 (95\%) & 4.0 (73\%) & 7,4,4,9 \\
33 & 1.73  & 1.45 (84\%)   & 13,10,10 & 2.07 (120\%)  & 1.25 (72\%) & 11,5,6,11 \\
42 & 0.71  & 0.71 (100\%)  & 14,14,14 & 1.24 (175\%)  & 0.82 (115\%) & 14,6,8,14 \\
\hline
\end{tabular}
\end{table}


\begin{table}[!tb]
\caption{Total time (in sec) of search for $10^5$ patterns
  in \emph{hr14}, up to 3 mismatches.
\label{tab:times3}}
\centering
\begin{tabular}{|c|c|c|c|c|}
\hline
m & 4 equal & 5 equal & \multicolumn{2}{|c|}{5 unequal} \\
\hline
15 & 241 & 211 (86\%) & 206 (85\%) & 2,3,5,1,4 \\
24 & 19.7 & 26.7 (136\%) & 19.6 (99\%) & 2,9,3,1,9 \\
33 & 4.3 & 6.9 (160\%) & 4.7 (109\%) & 6,9,6,1,11 \\
42 & 1.85 & 2.52 (136\%) & 2.05 (111\%) & 10,10,9,1,12 \\
51 & 1.07 & 1.57 (147\%) & 1.06 (99\%) & 12,13,12,1,13 \\
\hline
\end{tabular}
\end{table}

\subsubsection{Edit distance}

In the case of edit distance, along with the search schemes for 2 and
3 errors from the previous section, we also implemented search schemes
for 4 errors (see Appendix). 
Results are shown in Table~\ref{tab:times2edit} 
(2 errors),
Table~\ref{tab:times3edit} (3 errors) and
Table~\ref{tab:times4edit} (4 errors). 
%

\begin{table}[!tb]
\caption{Total time (in sec) of search for $10^4$ patterns
in \emph{hr14}, up to 2 errors (edit distance).
\label{tab:times2edit}}
\centering
\begin{tabular}{|c|c|c|c|c|c|c|}
\hline
$m$ & 3 equal & \multicolumn{2}{|c|}{3 unequal} & 4 equal & \multicolumn{2}{|c|}{4 unequal} \\
\hline
15 & 11.5 & 11.4 (99\%) & 6,6,3 & 10.9 (95\%) & 11.1 (97\%) & 3,5,1,6 \\
24 & 2.1 & 1.3 (62\%)  & 11,5,8 & 1.5 (71\%) & 1.0 (48\%) & 7,4,4,9 \\
33 & 0.34  & 0.22 (65\%)   & 13,10,10 & 0.35 (103\%)  & 0.19 (56\%) & 11,5,6,11 \\
42 & 0.08  & 0.08 (100\%)  & 14,14,14 & 0.18 (225\%)  & 0.08 (100\%) & 14,6,8,14 \\
\hline
\end{tabular}
\end{table}


\begin{table}[!tb]
\caption{Total time (in sec) of search for $10^4$ patterns
  in \emph{hr14}, up to 3 errors (edit distance).
\label{tab:times3edit}}
\centering
\begin{tabular}{|c|c|c|c|c|}
\hline
m & 4 equal & 5 equal & \multicolumn{2}{|c|}{5 unequal} \\
\hline
15 & 233 & 174 (75\%) & 168 (72\%) & 2,2,6,1,4 \\
24 & 13.5 & 13.2 (98\%) & 10.8 (80\%) & 3,8,3,1,9 \\
33 & 0.74 & 1.81 (245\%) & 1.07 (145\%) & 5,10,5,1,12 \\
42 & 0.28 & 0.45 (161\%) & 0.37 (132\%) & 9,10,9,1,13 \\
51 & 0.13 & 0.24 (185\%) & 0.14 (108\%) & 12,12,12,1,14 \\
\hline
\end{tabular}
\end{table}

\begin{table}[!tb]
\caption{Total time (in sec) of search for $10^4$ patterns
in \emph{hr14}, up to 4 errors (edit distance).
\label{tab:times4edit}}
\centering
\begin{tabular}{|c|c|c|c|c|c|c|}
\hline
$m$ & 5 equal & \multicolumn{2}{|c|}{5 unequal} & 6 equal & \multicolumn{2}{|c|}{6 unequal} \\
\hline
15 & 4212 & 3222 (76\%)  & 3,1,8,1,2 & 4028 (96\%) & 3401 (81\%) & 2,2,1,7,1,2 \\
24 & 145 & 133 (92\%)  & 7,3,5,1,8 & 131 (90\%) & 113 (78\%) & 2,7,3,4,5,3 \\
33 & 6.5  & 5.8 (89\%)   & 8,7,5,8,5 & 6.6 (102\%)  & 5.1 (78\%) & 4,8,6,3,5,7 \\
42 & 1.66  & 1.16 (70\%)  & 12,8,7,8,7 & 1.51 (91\%)  & 1.17 (70\%) & 7,8,8,5,2,12 \\
51 & 0.60  & 0.49 (82\%)  & 13,11,9,9,9 & 0.74 (123\%)  & 0.54 (90\%) & 9,10,9,9,1,13 \\
60 & 0.28  & 0.24 (86\%)  & 14,13,11,11,11 & 0.44 (157\%)  & 0.28 (117\%) & 11,12,11,11,1,14 \\
\hline
\end{tabular}
\end{table}

For $2$ errors, we observe up to two-fold speed-up for pattern lengths
$15$, $24$ and $33$. For the case of $3$
errors, the improvement is achieved for pattern lengths $15$ and $24$
(respectively 28\% and 20\%). Finally, for $4$ errors, we obtain a significant speed-up (18\%
to 30\%) for pattern lengths between $15$ and $51$.

\subsubsection{Experiments on simulated 
genomic
  reads}\label{subsubsec:experimentgenomic}

\begin{table}[!tb]
\caption{Total time (in sec) 
of search for $10^5$ reads in \emph{hr14}, up to 4 errors.
First row corresponds to read set with constant error rate $0.03$.
Second row corresponds to read set with error rate increasing from $0.0$ to $0.03$.
\label{tab:expgenomic}}
\centering
\begin{tabular}{|c|c|c|c|c|c|c|}
\hline
$m$ & 5 equal & 6 equal & \multicolumn{2}{|c|}{6 unequal} \\
\hline
100 & 247  & 250 (101\%)  & 283 (115\%) & 20,20,20,19,1,20 \\
100 & 415  & 367 (88\%)  & 350 (84\%) & 20,20,20,19,1,20 \\
\hline
\end{tabular}
\end{table}

Experiments of Section~\ref{subsec:experimentpract} have been made with
random patterns. In order to make experiments closer to 
the practical bioinformatic setting 
occurring in mapping genomic reads to their
reference sequence, we also experimented with patterns simulating
reads issued from genomic sequencers. For that, 
we generated realistic single-end reads of length $100$ (typical length of {\sc
  Illumina} reads) from \emph{hr14} 
using {\sc dwgsim} read simulator
(\url{https://github.com/nh13/DWGSIM}). Two sets of reads were
generated using two different error rate
values (parameter \texttt{-e} of {\sc dwgsim}): \texttt{0.03} for the
first dataset and \texttt{0.0-0.03} for the second one. This means
that in the first set, error probability is uniform over the read
length, while in the second set, this probability gradually increases from $0$
to $0.03$ towards the right end of the read. The latter simulates
the real-life situation occurring with current sequencing
technologies including {\sc Illumina}. 

The results are shown in Table~\ref{tab:expgenomic}. As expected, due
to a large pattern length, our schemes did not produce a speed-up for
the case of constant error rate. 
Interestingly however,
for the case of non-uniform distribution of errors, our schemes showed
a clear advantage. 
This illustrates another possible benefit of our techniques: they are
better adapted to a search for patterns with non-uniform
distribution of errors, which often occurs in practical situations
such as mapping genomic reads. 

\section{Conclusions} 
This paper can be seen as the first step towards an automated design
of efficient search schemes for approximate string matching, based on
bidirectional indexes. More research has to be done in order to allow
an automated design of optimal search schemes. 
It would be very interesting to study an approach when a search scheme
is designed simultaneously with the partition, rather than
independently as it was done in our work. 

We expect that search schemes similar to those studied in this paper
can be applied to hybrid approaches to approximate matching (see Introduction), as well
as possibly to other search strategies. 



\paragraph{Acknowledgements.}
GK has been supported by the ABS2NGS grant of the French government
(program \emph{Investissement d'Avenir}) as well as by a EU Marie-Curie
Intra-European Fellowship for Carrier Development.
KS has been supported by the \emph{co-tutelle} PhD fellowship of the French
government.
DT has been supported by ISF grant 981/11.
\bibliographystyle{plain}
\bibliography{string-index,string-read-alignment}

\newpage
\section*{Appendix}
The following search schemes were used in experiments described in
Section~\ref{sec:experiments}. 

\noindent
For 2 mismatches or errors:
\begin{enumerate}
\item Slightly modified scheme $\schemelam$.
The searches are:
$S_f = (123,000,022)$, $S_b = (321,000,012)$,
and $S'_{bd} = (213,001,012)$.
Note that the $\pi$-string of $S'_{bd}$ is $213$ and not $231$ as in $S_{bd}$.
While  $S_{bd}$ and $S'_{bd}$ have the same efficiency for equal-size
partitions, this in not the case for unequally sized parts.
\item $4$-part scheme with searches
$(1234,0000,0112)$, $(4321,0000,0122)$, \linebreak[4]$(2341,0001,0012)$, and
$(1234,0002,0022)$.
\end{enumerate}

\vspace{5mm}
\noindent
For 3 mismatches or errors:
\begin{enumerate}
\item $4$-part scheme with searches 
$(1234,0000,0133)$, $(2134,0011,0133)$, \linebreak[4]$(3421,0000,0133)$, and
$(4321,0011,0133)$.
\item $5$-part scheme with searches
$(12345,00000,01233)$, $(23451,00000,01223)$, \linebreak[4]$(34521,00001,01133)$, and
$(45321,00012,00333)$.
\end{enumerate}

\vspace{5mm}
\noindent
For 4 mismatches or errors:
\begin{enumerate}
\item $5$-part scheme with searches
$(12345,00000,02244)$, $(54321,00000,01344)$, 
\linebreak[4]$(21345,00133,01334)$, $(12345,00133,01334)$, 
$(43521,00011,01244)$, \linebreak[4]$(32145,00013,01244)$, 
$(21345,00124,01244)$ and $(12345,00034,00444)$.

\item $6$-part scheme with searches
$(123456,00000,012344)$, $(234561,00000,012344)$, 
$(654321,000001,012244)$, $(456321,000012,011344)$, 
$(345621,000023,011244)$, $(564321,000133,003344)$, 
$(123456,000333,003344)$, $(123456,000044,002444)$,
$(342156,000124,002244)$ and $(564321,000044,001444)$.
\end{enumerate}

\end{document}